\title{
  Pauli Fusion: a Computational Model \\
  to Realise Quantum Transformations from ZX Terms
}
\def\fudge{\hspace*{-2ex}}
\author{%
  \fudge Niel de Beaudrap \fudge
  \institute{%
    \fudge
    University of Oxford
    \fudge \\
    Oxford, UK
  }%
  \email{niel.debeaudrap}
  \vspace*{-1ex}
  \email{@cs.ox.ac.uk}%
  \and
  \fudge Ross Duncan \fudge
  \institute{%
    \fudge
    University of Strathclyde
    \fudge \\
    Glasgow, UK
  }%
  \institute{%
    \fudge
    Cambridge Quantum
    \fudge \\ \fudge
    Computing Ltd.,
    \fudge \\
    Cambridge, UK
  }%
  \email{ross.duncan}
  \vspace*{-1ex}
  \email{@strath.ac.uk}%
  \and
  \fudge Dominic Horsman \fudge
  \institute{%
    \fudge
    Universit\'e Grenoble Alpes
    \fudge \\
    Grenoble, France
  }%
  \email{dom.horsman}
  \vspace*{-1ex}
  \email{@gmail.com}%
  \and
  \fudge Simon Perdrix \fudge
  \institute{%
    \fudge
    CNRS LORIA, Inria Mocqua
    \fudge \\
    Universit\'e de Lorraine \\
    Nancy, France
  }%
  \email{simon.perdrix}
  \vspace*{-1ex}
  \email{@loria.fr}%
}
\def\arXiv[#1]{%
  \href{https://arxiv.org/abs/#1}{\textsf{[arXiv:#1]}}%
}
\newtheorem{definition}{Definition}
\newtheorem{theorem}{Theorem}
\newtheorem{lemma}{Lemma}
\theoremstyle{definition}
\newcommand{\ket}[1]{|{#1}\rangle}
\newcommand{\bra}[1]{\langle{#1}|}
\newcommand{\denote}[1]{
\left\llbracket #1 \right\rrbracket}
\let\preceq\preccurlyeq
\def\tpreceq{\mathrel{\tilde{\phantom a}\mspace{-16mu}\preceq}}
\newcommand{\ssur}[1]{^{\smash{(#1)}}}
\newcommand\PFus{\mathrm{PF}}
\newcommand\PFcompilation{\mbox{\scshape\small\uppercase{PF-Compilation}}}
\newcommand\PFflowfinding{\mbox{\scshape\small\uppercase{PF-Flow finding}}}
\newcommand\ie{\emph{i.e.}}
\newcommand\eg{\emph{e.g.}}
\definecolor{zx_red}{RGB}{227, 145, 145}
\definecolor{zx_green}{RGB}{216, 248, 216}
\tikzstyle{box}=[shape=rectangle, text height=1.5ex, text depth=0.25ex, yshift=0.5mm, fill=white, draw=black, minimum height=5mm, yshift=-0.5mm, minimum width=5mm, font={\small}]
\tikzstyle{Z dot}=[inner sep=0mm, minimum size=2mm, shape=circle, draw=black, fill=zx_green, inner sep=2pt]
\tikzstyle{Z phase dot}=[minimum size=5mm, font={\footnotesize\boldmath}, shape=rectangle, rounded corners=2mm, inner sep=0.2mm, outer sep=-2mm, scale=0.8, tikzit shape=circle, draw=black, fill=zx_green, tikzit draw=blue]
\tikzstyle{X dot}=[Z dot, shape=circle, draw=black, fill=zx_red]
\tikzstyle{X phase dot}=[Z phase dot, tikzit shape=circle, tikzit draw=blue, fill=zx_red, font={\footnotesize\boldmath}, inner sep=2pt]
\tikzstyle{hadamard}=[fill=yellow, draw=black, shape=rectangle, inner sep=0.6mm, minimum height=1.5mm, minimum width=1.5mm]
\tikzstyle{vertex}=[inner sep=0mm, minimum size=1.5mm, shape=circle, draw=black, fill=black]
\tikzstyle{sq}=[inner sep=0mm, minimum size=1.5mm, shape=circle, draw=black, fill=white]
\tikzstyle{vertex set}=[inner sep=0mm, minimum size=1.5mm, shape=circle, draw=black, fill=white,font={\footnotesize\boldmath}]
\tikzstyle{hadamard edge}=[-,color=blue,dashed,dash pattern=on 2pt off 0.7pt]
\tikzstyle{brace edge}=[-,tikzit draw=blue,decorate,decoration={brace,amplitude=1mm,raise=-1mm}]
\tikzstyle{none}=[inner sep=0mm]
\tikzstyle{every loop}=[]
\newcommand\pffont[1]{\textsf{\upshape #1}}
\newcommand{\mergH}[1]{\pffont{HMerge}_{#1}}
\newcommand{\projH}[1]{\pffont{HProj}_{#1}}
\newcommand{\splitH}[1]{\pffont{HSplit}_{#1}}
\newcommand{\initH}[1]{\pffont{HInit}_{#1}}
\newcommand{\rotH}[2]{\pffont{HRot}^{#2}_{#1}}
\newcommand{\Had}{\pffont{Had}}
\newcommand{\mergV}[1]{\pffont{VMerge}_{#1}}
\newcommand{\projV}[1]{\pffont{VProj}_{#1}}
\newcommand{\splitV}[1]{\pffont{VSplit}_{#1}}
\newcommand{\initV}[1]{\pffont{VInit}_{#1}}
\newcommand{\rotV}[2]{\pffont{VRot}^{#2}_{#1}}
\DeclareMathOperator\Odd{Odd}
\newcommand\nodetype[1]{\texttt{\upshape #1}}
\newcommand\Corr[1][]{%
  \def\@tempa{#1}%
  \ifx\empty\@tempa
    \def\@tempa{\nodetype{Corr}}%
  \else
    \def\@tempa{\nodetype{Corr}^{\{s_{#1}\}}}%
  \fi\@tempa}
\newlist{algenum}{enumerate}{9}
\setlist[algenum,1]{%
  label=\arabic*.,
  itemsep=0ex,
  topsep=1ex,
  leftmargin=2em}
\setlist[algenum,2]{%
  label=\alph*\upshape.,
  itemsep=0ex,
  topsep=-0.375ex,
  leftmargin=1.5em}
\setlist[algenum,3]{%
  label=(\!\!\;\itshape\roman*\;\!\upshape),
  topsep=0ex,
  itemsep=0.375ex,
  leftmargin=2.5em}
\newcommand{\idop}{{\:\!\text{\large$\mathbbm 1$}\:\!\!}}
\newcommand\parStyle[1]{\textrm{\mdseries\upshape({#1}\kern0.1ex)}}
\newenvironment{romanum}{%
	\begin{enumerate}[label=\parStyle{\itshape\roman*},labelwidth=\romanumlabelwd,leftmargin=1\romanumlabelwd,itemsep=0.2ex]%
}{%
	\end{enumerate}%
}
\newlength\romanumlabelwd
\begin{document}

\date{}
\maketitle
\vspace*{-2ex}
\begin{abstract}
We present an abstract model of quantum computation, the \emph{Pauli Fusion} model, whose primitive operations correspond closely to generators of the ZX calculus (a formal graphical language for quantum computing).
The fundamental operations of Pauli Fusion are also straightforward abstractions of basic processes in some leading proposed quantum technologies.
These operations have non-deterministic heralded effects, similarly to  measurement-based quantum computation.
We describe sufficient conditions for Pauli Fusion procedures to be deterministically realisable, so that it performs a given transformation independently of its non-deterministic outcomes. This provides an operational model to realise ZX terms beyond the circuit model. 
\end{abstract}

\vspace{-.75em}
%
%
\section{Introduction}
%
%

Quantum computing technology is now a reality, albeit not yet fault tolerant~\cite{IBM,google}. These computers need software, for algorithm design, verification, and compilation. Previously, quantum protocols have been represented largely using circuit notation, but with actual quantum devices to now compare theory with, the shortcomings of this model becomes clear. The circuit  model in particular does not directly represent many basic physical operations, \eg,~in quantum optics~\cite{kok2009five}; it is inflexible, in that circuits cannot easily be re-written to equivalent ones; and they are computationally hard to verify. Especially in near-term noisy intermediate scale quantum (NISQ) devices, these properties give bloated software with only basic optimisation tools, which cannot be verified at scales of more than a few tens of qubits.

Recent work has placed a different way of representing quantum processes at the forefront of optimisation, verification, and design for NISQ devices and beyond. The ZX calculus \cite{interacting,zxbook} is a diagrammatic notation equipped with equational re-write rule sets, that are complete for Clifford \cite{backens2014zx,pivoting,minimal-stabiliser-zx}, Clifford+T \cite{jeandel2018complete}, and full \cite{HNW,JPV-universal,vilmart2018near,ZXNormalForm} pure-state qubit quantum mechanics, and extensible to non pure quantum evolutions \cite{carette2019completeness}. The ZX calculus has led to optimisation strategies that out-perform all others in gate compilation \cite{duncan2019graph,kissinger2019cnot}, and in T-count reduction \cite{BBW-2020} (an important metric for fault-tolerant computing). The generators of the calculus correspond closely to the basic operations of lattice surgery in the surface code~\cite{de2017zx}, which otherwise are awkward to describe using the circuit model; and ZX has been used to verify and find novel error correction procedures \cite{duncan2014steane,de2017zx,gidney2018efficient}. It comes with a scalable notation capable of representing repeated structures at arbitrary qubit scales \cite{chancellor2016graphical}. The calculus also acts in the crucial role of an intermediate representation in a new commercial quantum compiler \cite{cowtan2019qubit}.

With the success of the ZX calculus as a tool for design, verification, and optimisation of quantum operations, the question now remains: what \textit{computational model} does the ZX calculus works best with? A computational model, in the sense we use it here, is a set of primitive operations and their composition, with which one may write algorithms and protocols, and represents the information processing capabilities at the designer's disposal. If we are to get the best out of using the ZX calculus as an intermediate representation, then it is most efficient to use a computational model that reflects, and is reflected by, the basic structure of the calculus. The model will be used both at the top of the compiler stack, to conceptualise, design, and verify protocols, and also at the bottom to extract operational procedures from a ZX calculus diagram. Previously, this `operational extraction' was known only as `circuit extraction' (e.g. as in \cite{kissinger2019cnot}). The use of circuit-model gates at extraction is  wasteful, especially if is then turned into procedures such as lattice surgery that very closely model ZX generators. The key then, is to produce a computational model that works efficiently and conceptually clearly with the ZX calculus.

In this paper we introduce the \textit{Pauli Fusion} model of quantum computing, whose fundamental operations include the merging and splitting of logical (Pauli) operators. These fundamental elements very closely model those found in lattice surgery \cite{lattice}, and in operations of optical fusion gates that have similar effects \cite{kok2009five} (see Appendix \ref{app:pf-ls-of}).
The PF model takes these operations as its primitives, from which (if desired) elements in a circuit or MBQC model could be constructed.
Complementarity, not unitarity, is the guiding principle; and Pauli Fusion (PF) procedures are in general non-deterministic. 
We show how the PF model has a direct representation in terms of generators of annotated ZX diagrams. As the diagrams can include indeterminism, we give a definition and a procedure for finding the \textit{PF flow} of a ZX diagram, which is sufficient to describe how to deterministically realise an operator by a series of (individually nondeterministic) PF operations.
By `splitting the atom' of well known logic gates, as a composition of more fundamental  operations, we give a novel approach to extracting operational meaning from a ZX diagram, and a new computational model corresponding directly to the ZX calculus.

\vspace{-.75em}
%
%
\section{The Pauli Fusion model}\label{model}
%
%
\vspace{-.25em}

We define the elements of the Pauli fusion model of computation (primitive \textit{Pauli Fusion (PF) operations}) in terms of CPTP maps, not all of which are unitary.
These CPTP maps are described in terms of their Kraus operators, and may also be classically controlled. Intuitively, the basic operations can be viewed as the splitting and merging of Pauli logical operators for qubits, and the byproducts produced when two logical qubits merge into one (readers familiar with the procedure of lattice surgery in surface codes will find this a straightforward abstraction).
PF operations also have a representation by \textit{Pauli Fusion (PF) diagrams}, which we introduce here. PF diagrams are ``annotated ZX'' (or AZX) diagrams, which incorporate the nondeterministic effects of these operations.
The requirement that a PF diagram correspond to a set of PF operations imposes a restriction which we define as \textit{runnability}. We show that this requirement is equivalent to the PF diagram having a \textit{time ordering} of its elements. This adds another layer of structure (essentially directed edges) to PF diagrams that we will see in the next section is crucial to understanding when a ZX diagram can deterministically be implemented by a set of PF operations. 

\vspace{-.5em}
%
%
\subsection{Pauli Fusion operations}
%
%
\vspace{-.25em}

We define the following linear transformations on one- or two-qubit state vectors:
\vspace*{-1.5ex}%
\begin{multicols}{2}%
\setlength{\columnseprule}{.75pt}%
\addtocounter{equation}{1}%
\edef\eqlabelell{\theequation$\ell$}%
\addtocounter{equation}{-1}%
\label{eqn:PFKrausOptors}
\begin{subequations}%
\allowdisplaybreaks
\vspace*{-6.5ex}%
\begin{align}{}
    A_{V,0}
  \;&=\;
    \bra{\texttt+}
\\[1ex]
    A_{V,1}
  \;&=\;
    \bra{\texttt-}
\\[1ex]
    K_{V,0}
  \;&=\;
    \ket{\texttt+}\!\bra{\texttt{++}} \;+\; \ket{\texttt-}\!\bra{\texttt{--}}\label{kopp}
\\[1ex]
    K_{V,1}
  \;&=\;
    \ket{\texttt+}\!\bra{\texttt{+-}} \;+\; \ket{\texttt-}\!\bra{\texttt{-+}}\label{komp}
\\[1ex]
    R_{V}
  \;&=\;
    \exp\bigl(-\tfrac{1}{2}i\, Z\bigr)
\\
    A_{H,0}
  \;&=\;
    \bra{0}
\\[1ex]
    A_{H,1}
  \;&=\;
    \bra{1}
\\[1ex]
    K_{H,0}
  \;&=\;
    \ket{0}\!\bra{00} \;+\; \ket{1}\!\bra{11}\label{kozz}
\\[1ex]
    K_{H,1}
  \;&=\;
    \ket{0}\!\bra{01} \;+\; \ket{1}\!\bra{10}\label{kooz}
\\[1ex]
    R_{H}
  \;&=\;
    \exp\bigl(-\tfrac{1}{2}i\, X\bigr)
\end{align}%
\end{subequations}%
\end{multicols}%
\vspace*{-1.5ex}%
\noindent
We conceive of $A_{V,s}$ and $A_{H,s}$ as ``annihilators'' mapping one qubit to zero; we use these as Kraus maps of destructive measurement operations in the $X$ or $Z$ eigenbasis respectively, with the index $s$ representing the outcome.
(The maps $A_{V,s}^\dagger$ and $A_{H,s}^\dagger$ are then preparation maps of $X$ or $Z$ eigenstates.)
The maps $K_{V,s}$ and $K_{H,s}$ are maps from two-qubit states to single qubit states, projecting onto the $(-1)^s$ eigenstates of $X \otimes X$ or $Z \otimes Z$, and producing a single qubit which is an eigenstate of $X$ or $Z$.
(In the case of $K_{V,1}$ and $K_{H,1}$, this involves breaking the symmetry between the two qubits, in a way which is arbitrary but ultimately unimportant.)
The operations $R_V$ and $R_H$ are single-qubit Pauli $Z$ and $X$ rotations by one radian: exponentiating by a real-valued angle $\alpha$ in radians yields a single-qubit $R_z(\alpha)$ or $R_x(\alpha)$ rotation respectively.

We use these linear maps on state-vectors, together with the single-qubit Hadamard gate $H$ and the two-qubit SWAP gate, to define the \emph{(elementary) PF operations} as the following CPTP maps, described here as acting on states $\rho$ variously on one or two qubits:
\vspace*{-1ex}%
\begin{multicols}{2}%
\setlength{\columnseprule}{.75pt}%
\addtocounter{equation}{1}%
\edef\eqlabelell{\theequation$\ell$}%
\xdef\PFopnslabel{\theequation}%
\addtocounter{equation}{-1}%
\begin{subequations}%
\label{eqn:PFoperations}%
\allowdisplaybreaks
\vspace*{-8ex}%
\begin{align}{}
\mspace{-12mu}
  \Had {} (\rho)
  \;&=\; H\, \rho \:\! H^\dagger
\\[1ex]
\mspace{-12mu}
  \initV {} (\rho)
  \;&=\; \bigl(A_{V,0}^\dagger \!\!\;\otimes\!\!\; \idop)\!\: \rho \:\! \bigl(A_{V,0} \!\!\;\otimes\!\!\; \idop)
\\[1ex]
\mspace{-12mu}
  \projV {} (\rho)
  \;&=\;\;
  \sum_{\mathclap{s\in\{0,1\}}}\; A_{V,s}\, \rho \:\! A_{V,s}^\dagger
\\
\mspace{-12mu}
  \splitV {} (\rho)
  \;&=\;
  K_{V,0}^\dagger\, \rho \:\! K_{V,0}
\\[1ex]
\mspace{-12mu}
  \mergV {} (\rho)
  \;&=\;\;
  \sum_{\mathclap{s\in\{0,1\}}}\; K_{V,s}\, \rho  \:\!K_{V,s}^\dagger
\\
\mspace{-12mu}
  \rotV {} {\alpha,S,T} (\rho)
  \;&=\;
  R_{V}^{\Theta(\alpha,S,T)} \rho \:\! R_{V}^{-\Theta(\alpha,S,T)}
\\[1ex]
\mspace{-12mu}
  \sigma {} (\rho)
  \;&=\; (\mathrm{SWAP})\, \rho \:\! (\mathrm{SWAP})^\dagger
\\[1ex]
\mspace{-12mu}
  \initH {} (\rho)
  \,&=\, \bigl(A_{H,0}^\dagger \!\!\;\otimes\!\!\; \idop)\!\: \rho \:\! \bigl(A_{H,0} \!\!\;\otimes\!\!\; \idop)
\\[1ex]
\mspace{-12mu}
  \projH {} (\rho)
  \;&=\;\;
  \sum_{\mathclap{s\in\{0,1\}}}\; A_{H,s}\, \rho \:\! A_{H,s}^\dagger
\\
\mspace{-12mu}
  \splitH {} (\rho)
  \;&=\,
  K_{H,0}^\dagger\, \rho \:\! K_{H,0}
\\[1ex]
\mspace{-12mu}
  \mergH {} (\rho)
  \;&=\;\;
  \sum_{\mathclap{s\in\{0,1\}}}\; K_{H,s}\, \rho  \:\!K_{H,s}^\dagger
\\
\mspace{-12mu}
  \rotH {} {\alpha,S,T} (\rho)
  \,&=\,
  R_{H}^{\Theta(\alpha,S,T)} \rho \:\! R_{H}^{-\Theta(\alpha,S,T)}
  \tag{\eqlabelell}
\end{align}%
\end{subequations}%
\end{multicols}%

\vspace*{-1ex}
\noindent
Note that the maps $\projV {}$, $\projH {}$, $\mergV{}$, and $\mergH{}$ all are non-unitary.
The first two of these are in fact measurement operations, which we suppose also produce a classical bit as a side-effect, representing the measurement outcome (the bit $s$ which indexes the Kraus operator).
We also suppose that $\mergV {}$ and $\mergH {}$ produce a classical bit $s$ as a side-effect, indicating which of the two Kraus operators were realised.
The probability with which a given value $s \in \{0,1\}$ is realised is  determined by the square of the Euclidean norm of the state $K_{V\!\!,\,s} \ket{\psi}$,\, $K_{H\!,\,s} \ket{\psi}$,\, $A_{V\!\!,\,s}\ket{\psi}$,\!\: or $A_{H\!,\,s} \ket{\psi}$ which would result from application of one of the Kraus operators to an input state $\ket{\psi}$.

We present $\initV {}$ and $\initH {}$ as maps on a system, but their effect is to prepare fresh qubits in the $\ket{\texttt+}$ or $\ket{0}$ state.
The maps $\splitV {}$ and $\splitH {}$ realise unitary embeddings of one qubit into two.
The operations $\rotV {} {\alpha,S,T}$ and $\rotH {} {\alpha,S,T}$ depend on sets $S$ and $T$ of labels, which indicate classical bits $s_x$ for $x \in S$ or $x \in T$ which may affect the angle of rotation (some of which may be the outcomes of the maps above), according to the function
\vspace*{-1ex}
\begin{equation}
  \Theta(\alpha,S,T)
  \;=\;
    \Biggl[ \:\!\prod_{v \in S} \; (-1)^{s_v} \!\!\:\Biggr] \!\!\:\alpha + \sum_{w \in T} s_w \!\; \pi  \,.
\end{equation}%

The operations of Eqn.~\eqref{eqn:PFoperations} may be performed in tensor product, and composed in any way which is well-typed.
For the operations $\rotV {} {\alpha,S,T}$ and $\rotH {} {\alpha,S,T}$ which may depend on classical bits, we also require that the value of the bit is determined (as an input, through a probability distribution, or through an operation which determines its value) at the time the map is performed.

%
%
\subsection{Pauli Fusion diagrams}
%
%

We may readily observe that the Kraus operations defined in Eqns.~(\PFopnslabel) have straightforward representations in the ZX calculus,
\vspace*{-1ex}%
\begin{equation}%
\label{eqn:ZXofPFKrausOptors}
\begin{aligned}{}
\mspace{-12mu}
    A_{V,0}
  &=
    \denote{\,\tikzfig{ZX-VProj0}\big.\,}\!;
&\;
    A_{V,1}
  &=
    \denote{\,\tikzfig{ZX-VProj1}\big.\,}\!;
&\;
    K_{V,0}
  &=
    \denote{\,\tikzfig{ZX-VMerge0}\big.\,}\!;
&\;
    K_{V,1}
  &=
    \denote{\,\tikzfig{ZX-VMerge1}\big.\,}\!;
\mspace{-6mu}
\\[1ex]
\mspace{-12mu}
    A_{H,0}
  &=
    \denote{\,\tikzfig{ZX-HProj0}\big.\,}\!;
&\;
    A_{H,1}
  &=
    \denote{\,\tikzfig{ZX-HProj1}\big.\,}\!;
&\;
    K_{H,0}
  &=
    \denote{\,\tikzfig{ZX-HMerge0}\big.\,}\!;
&\;
    K_{H,1}
  &=
    \denote{\,\tikzfig{ZX-HMerge1}\big.\,}\!;
\mspace{-6mu}
\\[1ex]
\mspace{-12mu}
    R_{V}^{\,\alpha}
  &=
    \denote{\,\tikzfig{ZX-VRot}\big.\,}\!;
    &\;
    R_{H}^{\,\alpha}
  &=
    \denote{\,\tikzfig{ZX-HRot}\big.\,}\!,
\mspace{-6mu}
\end{aligned}%
\end{equation}%

\vspace*{-1ex}%

\noindent
where $\denote {\,\cdot\,} $ is the standard interpretation of ZX diagrams (which in this article are read from left to right). Note that these maps, together with their adjoints, generate the ZX calculus.
Considering ZX as a potential intermediate language for quantum compilers, this close relation between the Kraus operators of Pauli Fusion and the  generators of ZX provides a tantalising prospect, of using the PF model to directly represent ZX diagrams.

To pursue this line of investigation, we consider how we might use the ZX calculus to represent linear superoperators, whose Kraus operators can be obtained (or more precisely, denoted) by composing the diagrams of Eqn.~\eqref{eqn:ZXofPFKrausOptors}.
We may then consider when such diagrams represent an operation which can be realised by a Pauli Fusion procedure.

\begin{definition}
  \label{def:PFdiagram}
  A \emph{Pauli Fusion diagram} (or PF-diagram) is an AZX diagram, with labelled vertices $V(D)$ and directed edges $E(D)$ which can be generated from the set of generators below.
  (We label these diagrams with the names of Pauli Fusion operations, \eg,~``$\mergV u$'', to indicate the operation for which the node $u$ is intended to stand.)
  This diagram is accompanied by a set $\mathcal B$ of labels of bits, $s_u \in \{0,1\}$ for $u \in \mathcal B$, which are involved in the annotations (\eg,~the sets $S$ and $T$ in $\rotV {}{\alpha,S,T}$ and $\rotH {} {\alpha,S,T}$. 
  \begin{center}
\medskip
\noindent \begin{tabular}{|cc|cc|cc|}
\hline
&&&&&\\[-1.5ex]
  $\splitV u$ &{\tikzfig{VSplit}}
&
  $\mergV u$ &{\tikzfig{VMerge}}
&
  $\rotV u {\alpha,S,T}$&{\tikzfig{VRot}}
\\[-1.5ex]&&&&&\\\hline&&&&&\\[-1.5ex]
  $\splitH u$&{\tikzfig{HSplit}}
&
  $\mergH u$&{\tikzfig{HMerge}}
& 
  $\rotH u {\alpha,S,T}$&{\tikzfig{HRot}}
\\[-1.5ex]&&&&&\\\hline&&&&&\\[-1.5ex]
  $\initV u$ &{\tikzfig{VInit}}
&
  $\initH u$ &{\tikzfig{HInit}}
&
  $\Had_u$&{\tikzfig{H}}
\\[-.1ex]&&&&&\\\hline&&&&&\\[-.5ex]
  $\projV u$ &{\tikzfig{VProj}}
&
  $\projH u$ &{\tikzfig{HProj}}
&
  $\sigma$&{\tikzfig{Swap}}
\\[-1ex]&&&&&\\\hline
\end{tabular}
\medskip
\end{center}  
\end{definition}

\paragraph{Remark.}
In the case of $\rotV {} {\alpha,S,T}$ and $\rotH {} {\alpha,S,T}$, we may write an explicit formula for the angle in place of $\Theta(\alpha,S,T)$ when that formula is simple enough: for instance, we may substitute the expression ``$\Theta(0,\emptyset,\{u\})$'' with ``$s_u \pi$''.

Following~\cite{DP-2010}, we define the semantics of the Pauli-Fusion diagrams relying on the semantics of the ZX-calculus.

\begin{definition}[Denotational semantics of Pauli Fusion diagrams]
Given a PF-diagram $D$ with a set $\mathcal B$ of index-labels for classical bits $s \in \{0,1\}^{\mathcal B}$\,:
\begin{itemize}
\item
  For a given $x \in \{0,1\}^{\mathcal B}$\,, $D(x)$ denotes the ZX-diagram where $s_b \gets x_b$ for each $b \in \mathcal B$\,.
  For a given $z \in \{0,1\}^{\mathcal B \setminus V(D)}$, let $D|_z$ be the Pauli Fusion diagram obtained by the partial assignment $s_b \gets z_b$ for all $b \in \mathcal B \setminus V(D)$.
\item
  If $\mathcal B \subseteq V(D)$, $\denote D^\natural$ denotes the superoperator $\rho \mapsto \sum_{s\in \{0,1\}^{\mathcal B}} \denote {D(s)} \rho  \denote {D(s)}^\dagger$, where $\denote {\,\cdot\,} $ is the standard interpretation of the ZX-diagrams.
\item
  For a probability distribution $p$ on $\{0,1\}^{\mathcal B \setminus V(D)}$, let $\denote D_p^\natural$ denote the superoperator
  \vspace*{-1ex}
  \begin{equation}
    \rho \;\;\mapsto \
      \sum_{\mathclap{\substack{\qquad\!\!
        s\in \{0,1\}^{\mathcal B \cap V(D)}
        \\\qquad\!\!
        z \in \{0,1\}^{\mathcal B \setminus V(D)}
      }}}
      \quad
      p(z) \, \denote {D|_z(s)} \rho  \denote {D|_z(s)}^\dagger  \;.
  \end{equation}
  \vspace*{-3ex}
  
  \noindent
  As a special case, for a fixed string $r \in \{0,1\}^{\mathcal B \setminus V(D)}$, let $\denote D_r^\natural$ denote $\denote D_p^\natural$ for $p$ the point-mass distribution on $r$.
\end{itemize}
\end{definition}%
\noindent
We define $\denote{ D }^\natural_p$ and $\denote{D}^\natural_r$ above in the case $\mathcal B \not\subseteq V(D)$ 
as we anticipate that this will be useful to describe procedures which are subject to noise or classical control.
In much of what follows below, we suppose that $\mathcal B \subseteq V(D)$: when $\mathcal B$ is not taken to be a subset of $V(D)$ we shall clearly indicate that this is the case.

\begin{definition}
  For a given PF diagram $D$, a string $x \in \{0,1\}^{\mathcal B \,\cap\, V(D)}$ is called a \emph{branch} \textup(or \emph{branch string}\textup) of the PF diagram.
  If $\mathcal B \subseteq V(D)$, we call $\denote{ D(x) }$ for such a string $x$ a \emph{branch map} \textup(and in particular, \emph{the branch map for $x$}\textup) of $D$; if $\mathcal B \not\subseteq V(D)$, then for $r \in \{0,1\}^{\mathcal B \setminus V(D)}$, we call $\denote{D|_r(x)}$ a \emph{branch map of $D$ given $r$} \textup(and in particular, \emph{the branch map of $D$ for $x$ given $r$}\textup).
\end{definition}

It will be useful to analyse PF procedures entirely in terms of PF diagrams.
However, because of the simple way in which we have defined them, not all Pauli Fusion diagrams correspond to an actual Pauli Fusion procedure.
For instance, consider the diagram
\vspace*{-0.5ex}
\begin{equation}
      D \;=\;\; \tikzfig{CNONSENSE}  \;.
\end{equation}
\vspace*{-1ex}

\noindent
This is a well-formed PF diagram, and denotes a superoperator $\mathbf D = \denote{D}^\natural$.
$\mathbf D$ is in fact a unitary CPTP map, with two equivalent Kraus operators indexed by the single bit $s_w \in \{0,1\}$.
However, the elements from which the PF diagram $D$ is composed cannot be mapped directly to a PF procedure, as the bit $s_w$ which is generated by the $\mergV w$ operation is used at the $\rotH u {s_w \pi}$ operation acting on one of its inputs.
We wish to consider under what conditions a Pauli Fusion diagram corresponds, part by part, to a Pauli Fusion procedure.

\begin{definition}
  Let $D$ be a Pauli-Fusion diagram with Kraus operators governed by bits $s_u$ for $u \in \mathcal B$.
  A \emph{time-ordering} of $D$ is a function $t: V(D) \to \mathbb N$ such that, for all $u,v \in V(D)$, 
  \begin{romanum}
  \item
    if there is a directed edge $u \to v$ in $D$, then $t(u) < t(v)$;
  \item
    if the operation at $v$ is either $\rotV v {\alpha,S,T}\!\!$ or $\rotH v {\alpha,S,T}\!$ operation for some $S, T \subseteq \mathcal B$, and $u \in S \cup T$, then $t(u) < t(v)$.
  \end{romanum}
    If $D$ has a time-ordering $t$, we say that $D$ is \emph{runnable}.
\end{definition}
\begin{lemma}
 $D$ is a runnable Pauli Fusion diagram if and only if it is the diagram of a Pauli Fusion procedure $\mathfrak P$.
\end{lemma}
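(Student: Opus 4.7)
The plan is to prove both directions by a straightforward constructive argument, exploiting the fact that a PF procedure is defined (in the paragraph following Eqn.~\eqref{eqn:PFoperations}) as any well-typed composition of elementary PF operations under tensor product and sequential composition, subject to the requirement that classical bits controlling rotations are already determined when the rotation is performed.

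For the ``only if'' direction, I assume $D$ is runnable and exhibit an explicit PF procedure $\mathfrak P$. Using a time-ordering $t$, partition $V(D)$ into layers $L_k = t^{-1}(k)$. Process the layers in increasing order of $k$: at step $k$, perform in tensor product the elementary PF operations corresponding to each vertex $v \in L_k$, taking the qubit inputs (wires) from the appropriate outputs of previously processed vertices or from the overall diagram inputs, and reading off the angle of any rotation $\rotV v {\alpha,S,T}$ or $\rotH v {\alpha,S,T}$ by evaluating $\Theta(\alpha,S,T)$ on the classical bits $s_u$ produced at layers $t(u) < k$. Condition~(i) of the time-ordering guarantees that every qubit input wire to $v$ has already been produced, so the tensor composition is well-typed; condition~(ii) guarantees that every classical bit referenced by the rotation angle is already available, so the operation of $\rotV v {\alpha,S,T}$ (or its $H$-variant) is well-defined when executed. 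Inserting SWAPs between layers if necessary to bring the relevant wires together, we obtain a PF procedure $\mathfrak P$ whose diagram is (up to relabelling) exactly $D$.

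For the ``if'' direction, I assume $D$ arises from a PF procedure $\mathfrak P$ and construct a time-ordering. A PF procedure, being a sequential-and-tensor composition of elementary operations, induces a natural partial order on its elementary operations: $u \prec v$ iff $u$ occurs strictly before $v$ in the composition. Setting $t(v)$ to be the position of $v$ in any linear extension of this partial order gives a map $V(D) \to \mathbb N$. For condition~(i), if $u \to v$ is a directed edge of $D$, then the output wire of $u$'s operation is an input wire of $v$'s operation in $\mathfrak P$, which is only well-typed if $u$ was performed before $v$; hence $t(u) < t(v)$. For condition~(ii), if $v$ is a $\rotV v {\alpha,S,T}$ or $\rotH v {\alpha,S,T}$ node and $u \in S \cup T$, then $v$'s rotation angle depends on $s_u$; by the definition of PF procedure, $s_u$ must already be determined at the time $v$ is performed, so $u$ was performed before $v$ and $t(u) < t(v)$.

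The main (mild) obstacle is that neither ``PF procedure'' nor ``the diagram of a PF procedure'' is given a formal inductive definition in the excerpt; the proof therefore rests on reading the informal description of PF procedures (well-typed tensor and sequential composition, with classical bits determined at the moment of use) as defining exactly the syntactic class whose well-definedness conditions match (i) and (ii). Once this identification is made, the two directions above are routine. In practice, one may wish to phrase the argument more formally by defining PF procedures inductively as expressions in a simple process calculus, and defining the diagram of $\mathfrak P$ by structural recursion, at which point both directions become easy inductions on the structure of $\mathfrak P$ and on the size of $V(D)$ respectively.
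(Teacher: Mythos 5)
Your proposal is correct and follows essentially the same route as the paper's own (sketched) proof: the paper likewise builds the procedure layer by layer as tensor products $\mathfrak{P}_\tau$ of the operations at nodes with $t(v)=\tau$ (padded with identities on idle wires) and composes them in order, and treats the converse direction as immediate from the sequential structure of a procedure. Your added remarks on well-typedness, the role of conditions (i) and (ii), and the informality of the notion of ``PF procedure'' are reasonable elaborations of the same argument rather than a different approach.
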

\begin{proof}[Proof (sketch).]
  It is easy to show that the diagram of any Pauli Fusion procedure has a time-ordering in the above sense.
  Conversely, if $D$ has a time-ordering, then for each $\tau \in \mathbb N$, recursively form the tensor product $\mathfrak P_\tau$ of all operations associated with nodes $v \in V(D)$ with $t(v) = \tau$, together with the identity operation on any qubits which are input wires of the diagram which have not yet been acted on, or qubits which have been produced by one operation but not acted on by another.
  Let $\mathfrak P = \mathfrak P_T \circ \cdots \circ \mathfrak P_1 \circ \mathfrak P_0$ for $T = \max_{v \in V} t(v)$: then $D$ is the diagram of $\mathfrak P$.
\end{proof}

\vspace{-.75em}
%
%
\section{PF-diagram extraction}
%
%

Considering ZX diagrams as an intermediate language, we wish to consider when such a diagram $D$ can be operationally realised by a Pauli Fusion procedure --- specifically, one which realises $D$ \emph{deterministically}, in the sense that all of the Kraus operators of the procedure are proportional to $\denote{D}$.
To this end, we define a ``flow'' condition --- analogous to the flow conditions of measurement-based quantum computation~\cite{DK-2006,DKMP-2007,DKPP09,perdrix:hal-01377339} --- which suffices for such a Pauli Fusion procedure to exist.

\medskip
\noindent
We use the following graph theoretic definitions:

\medskip
\begin{definition}
In a graph $G$ (possibly with self-loops) and a vertex-set $C \subseteq V(G)$, we write $\Odd(C) \subseteq V(G)$ for the set of vertices adjacent to an odd number of elements of $C$ (where a vertex with a loop is counted as a neighbour to itself).
\end{definition}

\begin{definition}
In a graph $G$ and a partial order $\preceq$ on $V(G)$, let $\prec$ stand for the irreflexive relation $(a \preceq b) \mathbin\& (a \ne b)$.
For a vertex $u \in V(G)$, we then define the \emph{future neighbourhood $N^+(u) \subset V(G)$ of $u$}, and the \emph{past neighbourhood $N^-(u) \subset V(G)$ of $u$}, by
\begin{equation*}
\begin{aligned}
  N^+(u) := \bigl\{v\in N(u) \;\big|\; u\prec v\bigr\},
\qquad
  N^-(u) :=\bigl\{v\in N(u) \;\big|\; v\prec u\bigr\}.
\end{aligned}  
\end{equation*}
We further define the shorthand $\delta^\pm(u) := \bigl\lvert N^\pm(u) \bigr\rvert$.
\end{definition}

%
%
\subsection{Signatures of ZX diagrams}
%
%

In the following, in order to maintain a close connection to PF diagrams, we suppose that ZX diagrams have distinct labels for each node, and that each open wire is explicitly indicated as either an \emph{input} or an \emph{output} wire.
We refer to such ZX diagrams as \emph{labelled} ZX diagrams.

\begin{definition}
A (labelled) ZX-diagram is in a \emph{graph-like form} \textup(or is \emph{graph-like}\textup) if it
  is $H$-free,
  has no connections between spiders of the same colour, and
  has no parallel wires or loops on any single vertex. 
\end{definition}
\noindent
By rewriting all $H$ nodes using the Euler decomposition, condensing all spiders, and removing all loops and (pairs of) parallel edges, it is easy to show that:
\begin{lemma}\label{lem:GL}
Any ZX-diagram can be transformed into a graph-like ZX-diagram.
\end{lemma}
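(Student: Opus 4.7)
The plan is to reduce an arbitrary ZX-diagram $D$ to graph-like form by iteratively applying three classes of ZX-calculus rewrites, following the authors' hint: Euler decomposition of Hadamards, spider fusion, and removal of self-loops and parallel-edge pairs. Termination will follow from a lexicographic complexity measure on the diagram, and correctness (preservation of the denotation) is immediate because each rewrite is a sound ZX equation.

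First, I would globally replace every $H$ node by its Euler decomposition $H \propto Z(\pi/2)\,X(\pi/2)\,Z(\pi/2)$, turning each Hadamard into a path of three alternating spiders. After this preprocessing pass the diagram is $H$-free, and crucially none of the subsequent rewrites can reintroduce $H$ nodes.

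Next, I would repeatedly apply the spider rule whenever two same-colour spiders are connected by an edge, merging them into a single spider (summing their phases) and contracting the chosen edge; any remaining parallel edges between the two merged spiders become self-loops on the result. In parallel I would remove each self-loop using the loop rule (a $Z$- or $X$-spider with a self-loop equals the same spider with the two loop-legs deleted, up to a global scalar), and each pair of parallel edges between a $Z$-spider and an $X$-spider using the Hopf/bialgebra rule, which replaces the pair by a disconnection (again up to a global scalar). These three operations are interleaved as necessary, since spider fusion can create loops or parallel edges, and loop removal can restore a spider to an unexceptional state that may then fuse with a neighbour.

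To see that the iteration terminates, I would use the lexicographic measure $(h,s,\ell)$, where $h$ counts $H$-nodes, $s$ counts pairs of same-colour spiders joined by at least one edge, and $\ell$ counts self-loops plus parallel-edge pairs. The Euler step strictly decreases $h$ (and is applied only in the first pass); spider fusion strictly decreases $s$ without increasing $h$; loop and parallel-edge removal strictly decrease $\ell$ without increasing $h$ or $s$. The process therefore terminates, and at termination all three defining conditions of a graph-like diagram hold simultaneously. The main subtlety I would have to treat carefully is the behaviour at the boundary: bare input/output wires, boundary wires joined directly to a Hadamard, or parallel boundary wires on a single spider are not covered by the interior rewrites above. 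I would handle all of these uniformly by padding each boundary wire with an identity $Z$-spider before the rewriting begins, so that every boundary becomes attached to an interior spider and the three rewrite classes apply without further case analysis.
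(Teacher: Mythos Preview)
Your proposal is correct and follows exactly the approach the paper itself indicates: the paper's entire proof is the single sentence preceding the lemma (``By rewriting all $H$ nodes using the Euler decomposition, condensing all spiders, and removing all loops and (pairs of) parallel edges, it is easy to show that\ldots''), and you have simply fleshed out those three steps with an explicit termination measure and some boundary bookkeeping. There is nothing to add.
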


\noindent
To a graph-like ZX diagram $D$, we associate a corresponding \emph{signature}  $(\mathcal G_D,\mathcal I,\mathcal O,\mathcal P)$,
which will enable us to reduce certain properties of $D$ to combinatorial properties of its signature.
\begin{definition}
  For a ZX diagram $D$, the \emph{signature graph} $\mathcal G_D$ is the undirected graph obtained by:
  \begin{enumerate}[itemsep=0ex, topsep=1ex]
  \item
    Adding a vertex to the open end of each input wire of $D$;
  \item
    Adding a vertex to the open end of each output wire of $D$;
  \item
    Adding a self-loop to each vertex of $D$ whose phase is an odd multiple of $\pi/2$.
  \end{enumerate}
  Then the \emph{signature} of $D$ is a tuple $(\mathcal G_D,\mathcal I,\mathcal O,\mathcal P)$ consisting of $\mathcal G_D$, together with the sets $\mathcal I, \mathcal O \subseteq V(\mathcal G_D) \setminus V(D)$ of added end-points to input\,/\,output wires, and a set $\mathcal P \subseteq V(D)$ consisting of those vertices of $D$ whose phases are an integer multiple of $\pi/2$. 
\end{definition}

\begin{figure}
\begin{equation*}
  \begin{aligned}
  \tikzfig{EX_0}  
  \end{aligned}
  \qquad
  \text{\LARGE$\leadsto$}
  \qquad
  \begin{aligned}
  \tikzfig{EX_0_OpenGraph}
  \end{aligned}
\end{equation*}
\caption{%
  \label{fig:exampleCorrectorSignature}%
  A graph-like ZX-diagram $D$, with a corresponding signature $\mathcal G_D$.
  Also indicated are the added input vertices $\mathcal I = \{i_1\}$  and output vertices $\mathcal O = \{o_1, o_2, o_3\}$; vertices in $\mathcal P$ are black (all other vertices are white).
}
\end{figure}

\medskip
\noindent 
An example of a signature graph obtained from a ZX diagram is show in 
Figure~\ref{fig:exampleCorrectorSignature}.

%
%
\subsection{Corrector sets and PF Flows}
%
%

To realise a ZX diagram as a sequence of operations, one obstacle is the fact that some simple ZX diagrams $D_0$ --- \eg,~the maps $A_{V,0}$, $A_{H,0}$, $K_{V,0}$, and $K_{H,0}$ as denoted in Eqn.~\eqref{eqn:ZXofPFKrausOptors} --- 
do not represent trace-preserving maps on their own, and must be paired with another ZX diagram $D_1$ as in the AZX diagrams of Definition~\ref{def:PFdiagram}, representing the Kraus operators of a CPTP map.

In each case, $D_1$ differs from $D_0$ by a phase operation, which raises the question of the conditions under which such a phase operation can be corrected by adapting operations which may be performed later.
For a partial order $\preceq$ representing a (somewhat flexible) time-ordering of operations, we may consider the conditions under which this is possible for a single ZX generator.

\begin{definition}[Correctors]
  \label{def:correctors}%
  Let $(\mathcal G_D,\mathcal I,\mathcal O,\mathcal P)$ be a signature of a graph-like ZX diagram $D$, and $\preceq$ a partial order on $V(\mathcal G_D)$.
  For vertices $u,v\in V(\mathcal G_D)$ and a subset $C\subseteq V(\mathcal G_D)$, we say that $C$ is a \emph{$v$-corrector of $u$} if
    $u \in \Odd(C)$, 
    and also
    \label{cond:correctorOrder}
    $v \prec w$ for all $w \in \bigl(C \setminus \mathcal P\bigr) \cup \bigl(\Odd(C) \setminus \{u\}\bigr)$.
\end{definition}
\noindent
A $v$-corrector at $u$ describes a way that a $\pi$-phase on some node $u \prec v$ in a ZX diagram $D$ can be propagated into the ``future'' through some set of nodes $C$.
If we surround all of the nodes $t \in C$ by $\pi$-phases of the opposite colour (one on each edge to a different vertex), and if we also negate the phase on $x$, this preserves the meaning of the diagram $D$.
For a graph-like ZX diagram $D$, we then propagate those $\pi$-phases to the neighbours of $t$, where they 
As $u \in \Odd(C)$, the overall phase contributed to $u$ by this process is $\pi$.
Thus, a $\pi$-phase at $u$ is equivalent to a $\pi$-phase at all nodes $w \in \Odd(C) \setminus \{u\}$, together with a change of sign at all nodes $t \in C$.

The constraint that $v \prec w$ for (some of) the nodes $w \in C \cup \bigl(\Odd(C) \setminus \{u\} \bigr)$ is motivated by the idea that the phase on $u$ is determined by an operation (represented by the vertex $v$) in the immediate past, and must be compensated for by operations which are yet to be performed.
A vertex $w$ whose phase angle is changed by a sign, or (apart from $u$) by a shift of $\pi$, is represents an operation which must be adapted to compensate for the phase on $u$, and must therefore occur in the future of $v$.
The role of $\mathcal P$ in the definition of $u$-correctors arises as follows:
\begin{itemize}
\item
  For a vertex $w \in V(D)$ whose phase is a multiple of $\pi$, changing the sign has no effect modulo $2\pi$.
  Thus, these vertices can be included in $C$, without requiring $v \prec w$ for that \emph{particular} reason (we may still require $v \prec w$ if $w \in \Odd(C)$, as we must shift then its phase by $\pi$.)
\item
  For a vertex $w$ whose phase is an odd multiple of $\pi/2$, negating the phase is equivalent to shifting the phase by $\pi$.
  If $w \in C$, then we may ignore the change of sign if the total phase from \emph{other} elements of $C$ adjacent to it is equivalent to $\pi$.
  Using the fact that $w$ is adjacent to itself, it suffices to adjust its phase (thereby requiring $v \prec w$) only if $w \in \Odd(C)$ as well.
\end{itemize}
These observations motivate the condition that $v \prec w$ only for those vertices $w$ which either belong to $C \setminus \mathcal P$, or to $\Odd(C) \setminus \{u\}$.

We now consider conditions under which every vertex is equipped with a corrector set (using the standard definitions given at the start of the section). 
\begin{definition}[PF-Flow]
For $(\mathcal G_D,\mathcal I, \mathcal O, \mathcal P)$ a signature of a graph-like ZX diagram $D$, a \emph{PF-flow} is a triple $(\preceq, f, \mathfrak C)$ consisting of a partial order $\preceq$ on $V(\mathcal G_D)$,  a function $f : V(D)
\to V(\mathcal G_D)$ and a set $\mathfrak C \subset \mathcal \wp(V(\mathcal G_D))$, such that for all $v\in V(D)$: 
\begin{romanum}
\item
  If $u$ is adjacent to $v$ in $\mathcal G_D$, then either $u \preceq v$ (and $u \notin \mathcal O$), or $v \preceq u$ (and $u \notin \mathcal I$), or both;
\item
  If $\delta^+(v)=0$, there is a set $C_{v,v} \in \mathfrak C$ which is a $v$-corrector of $v$;
\item
  For all $u \in N^-(v) \setminus \bigl\{ f(v) \bigr\}$, there is a set $C_{u,v} \in \mathfrak C$ which is a $v$-corrector of $u$.
\end{romanum}%
\end{definition}%
\noindent
That is, any pair of neighbours have some definite ordering in $\preceq$; if $v$ is a node with no neighbours in its future (which we model as a projection onto some state of one or more qubits), we require a strategy to correct a $\pi$-phase on that node; and if $v$ is a node with more than one input (which we model as a composition of merges), we must have a strategy to correct $\pi$-phases which might accumulate on its past neighbours, possibly apart from a single distinguished past neighbour.

%
%
\subsection{Compilation of ZX diagrams to Pauli Fusion diagrams}\label{sec:compilation}
%
%

Having defined PF-Flows as a strategy for correcting phases in a Pauli Fusion procedure, resulting from the different Kraus maps as we attempt to realise different ZX generators as transformations, we consider how this information can be used to deterministically realise a ZX diagram as a transformation.
This section is dedicated to the proof of the following Theorem:

\begin{theorem}
\label{thm:PFFlowSufficient}%
For any graph-like ZX-diagram $D$ with a PF-Flow,
one may construct a runnable Pauli Fusion diagram $D_\PFus$ (with a set $\mathcal B \subseteq V(D_\PFus)$ of bit-labels) which realises $D$ in every branch: that is to say, for which $\forall x\in \{0,1\}^{\mathcal B}: \denote{D_\PFus(x)} \propto \denote D$.
\end{theorem}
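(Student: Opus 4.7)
The plan is to compile $D$ vertex-by-vertex into a PF diagram $D_\PFus$, using the corrector sets supplied by the PF-flow to insert classically-controlled Pauli adjustments that absorb the byproduct operators of each non-deterministic Kraus map. Concretely, I would replace each Z-spider of arity $(n,m)$ with phase $\alpha$ by a cascade of $n{-}1$ $\mergH{}$-nodes, one $\rotH v {\alpha,\emptyset,\emptyset}$-node, and $m{-}1$ $\splitH{}$-nodes (with the X-coloured analogues for each X-spider), so that the input and output wires of $D$ become the input and output wires of $D_\PFus$. Reading off Eqn.~\eqref{eqn:ZXofPFKrausOptors}, the $s=0$ branch of each $\mergH{}$, $\mergV{}$, $\projH{}$ or $\projV{}$ produces the intended ZX generator exactly, while the $s=1$ branch differs from it only by a Pauli byproduct on one of the attached qubits.

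For each such non-deterministic node $u$ I would then use the PF-flow correctors to push the byproduct forward in time. The key ZX identity (essentially the argument given just after Definition \ref{def:correctors}) is: for any $C \subseteq V(\mathcal G_D)$, decorating every node of $C$ with a $\pi$-phase of the opposite colour is equivalent to shifting the phase of every vertex in $\Odd(C)$ by $\pi$ and negating the phase of every vertex in $C \setminus \mathcal P$, with $\mathcal P$-vertices absorbing any residual sign change via their self-loops. Applying this identity with the $v$-corrector $C_{u,v}$ trades a stray $\pi$ at $u$ (coming from the $s_u = 1$ Kraus) for a combination of phase shifts and sign flips on rotations further along the partial order. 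Concretely, for each correctable bit $s_u$ I would adjoin $u$ to the set $S$ of every $\rotV w {\alpha,S,T}$ or $\rotH w {\alpha,S,T}$ with $w \in C_{u,v} \setminus \mathcal P$, and adjoin $u$ to $T$ for every such rotation with $w \in \Odd(C_{u,v}) \setminus \{u\}$; the sink case $\delta^+(v) = 0$ using $C_{v,v}$ is handled identically.

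Runnability of the resulting $D_\PFus$ then follows directly from $\preceq$: every edge of $D_\PFus$ either lies inside a single gadget (and is internally ordered by construction) or sits along an edge of $\mathcal G_D$, already oriented by $\preceq$; the corrector condition ``$v \prec w$ for every $w \in (C \setminus \mathcal P) \cup (\Odd(C) \setminus \{u\})$'' guarantees that every rotation whose angle now depends on a new bit is performed strictly after the operation producing it. Any linear extension of $\preceq$ to a total order yields a valid time-ordering, so by the preceding Lemma $D_\PFus$ is the diagram of a genuine PF procedure. To check that $D_\PFus$ realises $D$ in every branch, I would induct on the Hamming weight of the branch string: the all-zero branch is proportional to $\denote D$ by construction, and flipping one further bit $s_u$ from $0$ to $1$ on top of any branch applies exactly the compensating ZX rewrite supplied by $C_{u,v}$, leaving the underlying map unchanged up to scalar.

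The main obstacle I anticipate is the careful treatment of the $\mathcal P$ vertices: the phase-negation/phase-shift equivalence modulo $2\pi$ used above depends on the self-loop contribution, and turning the paragraph-long rationale following Definition~\ref{def:correctors} into a case-free ZX derivation valid for every corrector $C$ and every placement of $\mathcal P$-vertices will require some bookkeeping. Once that identity is fully airtight, the inductive branch-by-branch cancellation and the runnability check are essentially mechanical.
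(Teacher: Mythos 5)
Your proposal follows essentially the same route as the paper's own proof: the spider decomposition into merge/rotation/split cascades, the use of corrector sets to adjoin the nondeterministic bits to the $S$-sets (sign flips on $C_{u,v}\setminus\mathcal P$) and $T$-sets ($\pi$-shifts on $\Odd(C_{u,v})\setminus\{u\}$) of downstream rotations, runnability via a time-ordering extracted from $\preceq$, and determinism by induction on the Hamming weight of the branch string all match the paper's \PFcompilation{} procedure and Lemmas~\ref{lemma:runnability} and~\ref{lemma:determinism}. The only detail you gloss over, and which the paper handles via the sets $P_{u_i,v}$, is that when a high-degree spider is decomposed the byproduct bits of the several merges in the cascade must be routed to the corrector set of the appropriate past neighbour $u_i$, but this is bookkeeping rather than a new idea.
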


\begin{figure}[p]
\vspace*{-3ex}
\begin{framed}
\smallskip
\noindent\underline{\smash{\PFcompilation.}}
\smallskip

For a graph-like ZX diagram $D$ together with a PF-Flow $(\preceq, f, \mathfrak C)$, and given the signature $(\mathcal G_D,\mathcal I, \mathcal O, \mathcal P)$ for $D$, perform the following transformations on $D$:
\begin{subequations}%
\begin{algenum}[start=0]%
\item\textbf{\sffamily%
  Orientation, inputs, and outputs.
}
  Direct the edges of $D$ consistently with the partial order $\preceq$.
  At each input and output, add a trivial node with the corresponding vertex-label $i_j \in \mathcal I$ or $o_j \in \mathcal O$, with the opposite colour to the first/final node on that input/output, \eg:
  \vspace*{-1ex}
  \begin{align} 
    {\tikzfig{first_Spider}}      
  \;&\mapsto\;\,
    {\tikzfig{first_Spider_xform}}  
&;\qquad
    {\tikzfig{final_Spider}}      
  \,&\mapsto\;
    {\tikzfig{final_Spider_xform}}  
  \end{align}
  
\item\textbf{\sffamily%
  Merge-Split Decomposition.
}
  Decompose each $v \in V(D)$ of degree $> 2$ using merges, rotations, splits, as in Eqn.~\eqref{eqn:spiderSplit}.
  If $\delta^-(v) = 0$, replace the merges with a preparation; and if $\delta^+(v) = 0$, replace the splits with a projection.
  \vspace*{-1ex}
  \begin{align}{}
    \label{eqn:spiderSplit}
    \mspace{-12mu}
    \begin{aligned}
      \tikzfig{Spider_comp}      
    \end{aligned}
  \;\;\mapsto
    \begin{aligned}
      \tikzfig{Spider_decomp}  
    \end{aligned}
    \,\,
  \\[-4ex]\notag
  \end{align}%
  Define $P_{u_i,v} := \bigl\{ v_k \,\big|\; \text{the dot with a $s_{v_k} \pi$ phase is on the path from $u_i$ to $v$}\bigr\}$.

\item\textbf{\sffamily%
  Projections.
}
  Implement each projection by a rotation and a measurement, \eg:
  \vspace*{-1ex}
  \begin{align}
    \begin{aligned}
      \tikzfig{Proj} \;\;\mapsto\;\; \tikzfig{H_rot_meas}
    \end{aligned}
  \end{align}

\item\textbf{\sffamily%
  Merge-Correction.
}
  For each $v \in V(D)$, each neighbour $u\in N^-(v)\setminus \{f(v)\}$,
  and each $t \in C_{u,v} \setminus \mathcal P$ and $w \in \Odd(C_{u,v}) \setminus \{u\}$, modify the nodes $t$ and $w$ as follows: 
  \vspace*{-1ex}
  \begin{align}
      \tikzfig{Mcorr-C} \;\;\mapsto{}&\;\; \tikzfig{Mcorr-C-2}
      \\[1ex] 
      \tikzfig{Mcorr-OddC} \;\;\mapsto{}&\;\; \tikzfig{Mcorr-OddC-2}
  \end{align}

\item\textbf{\sffamily%
  Projector-Correction.
}
  For each $v \!\in\! V\!\!\;(D)$, each neighbour $u \!\in\! N^-(v) \!\setminus\! \{f(v)\}$, and each $t \in C_{u,v} \setminus \mathcal P$ and $w \in \Odd(C_{u,v}) \setminus \{u\}$, modify the nodes $t$ and $w$ as follows: 
  \vspace*{-1ex}
  \begin{align}
      \tikzfig{Mcorr-C} \;\;\mapsto{}&\;\; \tikzfig{Pcorr-C-2}
      \\[1ex] 
      \tikzfig{Mcorr-OddC} \;\;\mapsto{}&\;\; \tikzfig{Pcorr-OddC-2}
  \end{align}

\end{algenum}
\end{subequations}
\vspace*{-3ex}
\end{framed}
\vspace*{-2ex}
\caption{%
  \label{fig:PF-compilation}
  An illustrated procedure to transform a ZX-diagram $D$ with a PF-Flow into a corresponding Pauli Fusion diagram $D_\PFus$.
}
\vspace*{-1ex}
\end{figure}

\noindent

The proof involves a procedure \PFcompilation\ to construct $D_\PFus$, shown in Figure~\ref{fig:PF-compilation}.
In the following, we occasionally refer to $u \in V(D)$ as vertex \emph{labels}, as well as vertices.
This is important because the diagram $D_\PFus$ is constructed from $D$ in such a way that every node-label in $V(D)$ is also a node-label in $V(D_\PFus)$, but in some cases with significantly different relationships to other vertices.
In particular, by the construction of \PFcompilation, any label $u \in V(D)$ corresponds to a vertex in $D_\PFus$ with degree at most two.

\begin{lemma}[Runnability]%
\label{lemma:runnability}%
Let $D_\PFus$ be the Pauli Fusion diagram which \PFcompilation\ produces from graph-like ZX diagram $D$ and a PF-Flow $(\preceq, f, \mathfrak C)$. 
Then $D_\PFus$ is runnable.
\end{lemma}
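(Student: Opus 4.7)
The plan is to construct an explicit time-ordering on $V(D_\PFus)$ by lifting the partial order $\preceq$ supplied by the PF-Flow, and then to check the two runnability conditions in turn. First I would fix any linear extension $\sigma : V(\mathcal G_D) \to \mathbb N$ of $\preceq$. The compilation procedure replaces each ZX vertex $v \in V(D)$ by a small ``fan'' sub-diagram: a cascade of merges (or a single preparation, when $\delta^-(v) = 0$), feeding a central rotation, feeding a cascade of splits (or a measurement, when $\delta^+(v) = 0$); moreover, projections are themselves rewritten as a rotation followed by a measurement. For a PF-node $x$ that lives in the sub-diagram associated with $v$, I would set $t(x) = N\sigma(v) + k_x$, where $N$ is larger than the size of any one sub-diagram and $k_x \in \{0,\dots,N{-}1\}$ ranks the nodes of that sub-diagram in the local order (merges/preparation $<$ rotation $<$ splits/measurement). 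This gives a bona-fide function $t : V(D_\PFus) \to \mathbb N$.

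Next I would verify condition (i) of the time-ordering definition. An edge of $D_\PFus$ is either internal to some sub-diagram, in which case it travels from a merge or preparation to the rotation or from the rotation to a split or measurement, and is handled by the ``minor'' ordering $k_x$; or else it is inherited from a directed edge $u \to v$ of $D$, which by step 0 of \PFcompilation\ is oriented consistently with $\preceq$. In the latter case the edge connects a split of $u$ to a merge of $v$ with $u \prec v$, so $\sigma(u) < \sigma(v)$ and hence $t$ strictly increases along the edge.

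For condition (ii), I would give a case-analysis of every rotation node of $D_\PFus$ and argue that each bit it depends on is generated at a strictly smaller time. The rotation at the centre of the decomposition of $v$ depends, via the $P_{u_i,v}$ sets of step 1, on bits $s_{v_k}$ produced by the merges of $v$ itself; these sit at strictly smaller minor times within $\sigma(v)$, so $t$-ordering holds. Any additional bit dependency $s_v$ on a rotation inside the sub-diagram of a node $t \in C_{u,v}\setminus \mathcal P$ or $w \in \Odd(C_{u,v}) \setminus \{u\}$ is introduced by steps 3 or 4 (merge- or projector-correction); by the defining property of the corrector sets, $v \prec t$ and $v \prec w$, hence $\sigma(v) < \sigma(t)$ (respectively $\sigma(w)$), so the source of $s_v$ — a merge or a measurement in the sub-diagram of $v$ — has time at most $N\sigma(v) + (N{-}1) < N\sigma(t)$, strictly less than the time of the corrected rotation. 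The remaining correctors, those of the form $C_{v,v}$ attached to vertices with $\delta^+(v)=0$, are handled identically, the only modification being that $s_v$ now issues from the measurement node rather than a merge.

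The main obstacle is not any single computation but the bookkeeping of the case analysis: one has to confirm that every bit appearing in $\Theta(\alpha,S,T)$ at any rotation node of $D_\PFus$ has been introduced by \emph{either} the internal wiring of its own sub-diagram \emph{or} by a correction step from some $v$ with $v \prec t$, $v \prec w$. Once the compilation procedure is enumerated exhaustively (five steps, each with a small number of sub-cases), no other source of bit dependency remains, and the time-ordering above witnesses runnability.
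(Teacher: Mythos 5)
Your proof is correct and follows essentially the same route as the paper's: both construct an explicit time-ordering on $V(D_\PFus)$ from the PF-Flow's partial order and then verify the two runnability conditions, with the bit-dependency case resting on the corrector-set requirement that $v \prec t$ and $v \prec w$ for every corrected node. The only cosmetic difference is that the paper defines $t$ by a recursive longest-path formula over the union of the edge relation and $\prec$, whereas you use a linear extension of $\preceq$ with block offsets and a local (topological) rank inside each fan sub-diagram; both yield valid witnesses.
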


\begin{lemma}[Determinism]%
\label{lemma:determinism}%
Let $\mathcal B := \bigl\{u\in V(D_\PFus) \;\big|\; \text{$u$ is a merge or a projection}\bigr\}$. 
For any $s \in \{0,1\}^{\mathcal B}$, $\denote{D_\PFus(s)} = \pm \denote{D}$. 
\label{lem:determinism}\end{lemma}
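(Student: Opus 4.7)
The plan is to prove the claim in two stages. First, identify the zero-branch map $\denote{D_\PFus(0)}$ with $\pm\denote{D}$. Then show that every other branch map $\denote{D_\PFus(s)}$ agrees with the zero-branch map up to sign, by using the PF-Flow data to cancel the $\pi$-phases introduced by nonzero Kraus outcomes.

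For the first stage, set $s_m = 0$ for every merge and projection $m \in \mathcal B$. Step~1 of \PFcompilation\ then replaces each spider of $D$ by a tree of $K_{V\!,0}$ or $K_{H,0}$ merges, a central rotation, and a tree of $K_{V\!,0}^\dagger$ or $K_{H,0}^\dagger$ splits (with the merge tree replaced by a preparation, or the split tree by a projector, at boundary spiders of one-sided degree). By the ZX spider law this composition denotes the original spider up to a scalar of unit modulus. Step~2 replaces each zero-outcome projector by a trivial rotation followed by a zero-outcome measurement, preserving the denotation. Finally, the modifications of Steps~3 and~4 contribute only phase factors of the form $s_m \pi$, which vanish on $s \equiv 0$. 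Hence $\denote{D_\PFus(0)} = \pm\denote{D}$.

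For the second stage, proceed by induction on the Hamming weight of $s$, flipping one bit $s_m$ at a time in an order compatible with the time-ordering of $D_\PFus$. Flipping $s_m$ at a merge node replaces $K_{*,0}$ by $K_{*,1}$, which differs by an attached $\pi$-dot of the opposite colour on a specific wire incident to $m$; flipping $s_m$ at a projection node does likewise with $A_{*,0}$ versus $A_{*,1}$. Concurrently, the phase shifts activated by Steps~3 and~4 negate the rotation angles at each $t \in C_{u,v} \setminus \mathcal P$ and add $\pi$ to each $w \in \Odd(C_{u,v}) \setminus \{u\}$ (taking $C_{v,v}$ in place of $C_{u,v}$ when $v$ is a projection). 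These two effects are made equal by the strong-complementarity $\pi$-copy rule: in a graph-like diagram, a $\pi$-dot of the opposite colour at a single vertex $u$ equals, up to a global sign, the effect of negating the phase of every vertex in $C$ and shifting by $\pi$ the phase of every vertex in $\Odd(C) \setminus \{u\}$, for any $C$ with $u \in \Odd(C)$. The $\mathcal P$-exclusion in Step~3/4 is justified by a short case analysis on the phase of a vertex $t \in C \cap \mathcal P$: phases that are multiples of $\pi$ are sign-invariant modulo $2\pi$, while phases that are odd multiples of $\pi/2$ satisfy $-\alpha \equiv \alpha + \pi \pmod{2\pi}$, with the consequence that the desired sign-flip is automatically delivered (or cancelled) by the self-loop's contribution to the $\Odd(C)$ count. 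Thus the newly introduced $\pi$-dot is cancelled by the activated corrections, yielding $\denote{D_\PFus(s)} = \pm\denote{D_\PFus(0)}$.

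The main obstacle is carrying out this $\mathcal P$-bookkeeping carefully: one must verify, for each combination of phase in $\{0,\pi/2,\pi,3\pi/2\}$ and membership or non-membership in $\Odd(C)$, that the ZX rewrite prescribed by the $\pi$-copy rule coincides with the correction actually scheduled by \PFcompilation. The remaining ingredients --- the spider law, strong complementarity, and the correspondence between the compiled merges/splits and the original spider inputs/outputs --- are standard, and the induction on Hamming weight carries through without additional friction.
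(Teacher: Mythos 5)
Your proposal follows essentially the same route as the paper's proof: verify the zero branch directly, then induct on the Hamming weight of $s$, cancelling each flipped bit by propagating the resulting $\pi$-dot through a corrector set (sign-flips on $C \setminus \mathcal P$, $\pi$-shifts on $\Odd(C)\setminus\{u\}$, with the same case analysis for vertices in $\mathcal P$). The only detail you elide is that a merge bit $s_m$ sitting inside the decomposition of a high-degree spider $\tilde v$ may lie on the paths $P_{\tilde u_j,\tilde v}$ of \emph{several} past neighbours $\tilde u_j$ simultaneously, so the corrector actually activated is the symmetric difference $C_{\tilde u_1,\tilde v}\mathbin\Delta C_{\tilde u_2,\tilde v}\mathbin\Delta\cdots$ rather than a single $C_{u,v}$ --- a bookkeeping point the paper handles explicitly, not a new idea.
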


\medskip
\noindent
The proofs for these Lemmas are given in Appendix \ref{app:run} and \ref{app:det}.

\vspace{-.75em}
%
%
\section{An efficient algorithm for find PF-Flows}
%
%

The PF-Flow is a sufficient condition for compiling a ZX-diagrams  to a Pauli Fusion diagram. In this section we show there exists an efficient algorithm for deciding whether an ZX-diagram has a PF-Flow. Like for the flow condition for measurement-based quantum computing \cite{MP-2008}, the algorithm produces a PF-Flow, when it exists, and hence a strategy for correcting phases.

\begin{figure}[!b]
\vspace*{-1.5ex}
\begin{framed}
\vspace*{-0.5ex}
\noindent\underline{\smash{\PFflowfinding}.} 
\smallskip
For the signature $(\mathcal G_D,\mathcal I, \mathcal O, \mathcal P)$ of a graph-like ZX diagram $D$:

\textbf{\sffamily Initialise} 
\begin{minipage}[t]{0.75\textwidth}
  $M := \mathcal O$, the set of marked elements; \\
  $\delta M := \mathcal O$, a set of newly marked elements; \\
  ${\preceq} := \bigl\{ (u,u) \,\big|\, u \in V(\mathcal G_D) \bigr\}$, a partial order relation on $M$; \\
  $f := \emptyset$, a function on the empty subset $\emptyset \subset V(\mathcal G_D)$; \\
  $\mathfrak C := \emptyset$, an empty set of corrector-sets.
\end{minipage}
\medskip

\textbf{\sffamily Repeat} until $\delta M=\emptyset$: 
\begin{algenum}
\item
  Reset $\delta M := \emptyset$.
\item
  \label{step:find_correctable_vtces}%
  Let $R$ be the set of vertices $u \in V(D) \setminus M$, for which
  \\
  there exists a set $C_{u} \subseteq M \cup \mathcal P$ such that $\Odd(C_{u}) \setminus M = \{u\}$.  
\item
  For each $v \in V(D) \setminus M$:
  \begin{algenum}
  \item
    Let \begin{minipage}[t]{0.85\textwidth}
      $N_v$ be a set of vertices ``nearby'' to $v$ to test for correctability: \\
      If $N(v) \cap M = \emptyset$, let $N_v := N(v) \cup \{v\}$; 
      otherwise let $N_v := N(v) \setminus M$.
    \end{minipage}
  \item
    Let $F_v := N_v \setminus R$ be the set of non-correctable vertices nearby to $v$.
  \item
    If $\lvert F_v \rvert \le 1$:
    \begin{algenum}
    \item%
      \label{step:mark_an_element}
      Add $v$ to the set of newly marked elements, $\delta M := \delta M \cup \{ v \}$.
    \item
      For 
        each $u \in N_v \cap R$: 
        let $C_{u,v} := C_u$ as constructed above, \\
        and add it to the set of corrector sets, $\mathfrak C := \mathfrak C \cup \{ C_{u,v} \}$.
    \end{algenum}
  \item
    If $F_v = \{ w \}$ for some $w \in N(v)$, let $f := f \cup \{(v,w)\}$. 
  \item
    If $F_v = \emptyset$, pick some $m \in M$ and let $f := f \cup \{(v,m)\}$.
  \end{algenum}
\item
  Update the partial order ${\preceq} := {\preceq} \,\cup\, (\delta M \times M)$,
  \\
  so that the old marked elements bound the newly marked elements from above.
\item
  \label{step:mark_vertices}
  Update the set of marked elements $M := M \cup \delta M$.
\end{algenum}
\textbf{\sffamily Return} $(\preceq, f, \mathfrak C)$ if $V(D) \subseteq M$; otherwise return $(\emptyset, \emptyset, \emptyset)$.
\vspace*{-1ex}
\end{framed}

\vspace*{-1ex}
\caption{
  A procedure to efficiently construct a PF-Flow, provided one exists.
  \label{fig:PF-flow}%
}
\end{figure}

\begin{theorem}
  Given a graph-like ZX-diagram $D$, there is an efficient algorithm to decide whether it has a PF-Flow, and to construct a PF-Flow if one exists.
\end{theorem}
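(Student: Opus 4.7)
The plan is to establish three properties of the algorithm \PFflowfinding{} given in Figure~\ref{fig:PF-flow}: polynomial-time efficiency, soundness (the returned triple is a PF-Flow whenever $V(D) \subseteq M$ at termination), and completeness (the algorithm succeeds whenever a PF-Flow exists). The structure closely parallels the Mhalla--Perdrix flow-finding procedure for MBQC~\cite{MP-2008}, with the Pauli-phase set $\mathcal{P}$ playing the role of ``free'' corrector elements that need not be ordered in the flow.

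\textbf{Efficiency.} The outer loop iterates at most $|V(\mathcal{G}_D)|$ times, since each non-terminal iteration adds at least one vertex to $M$ and the loop exits as soon as $\delta M = \emptyset$. Inside each iteration, the dominant cost lies in computing $R$ in Step~\ref{step:find_correctable_vtces}: for each unmarked $u$, asking whether there exists $C_u \subseteq M \cup \mathcal{P}$ with $\Odd(C_u) \setminus M = \{u\}$ is a linear-algebra question over $\mathbb{F}_2$ on the adjacency matrix $A$ of $\mathcal{G}_D$ (with self-loops contributed by $\mathcal{P}$), namely the solvability of $A\,\mathbf{1}_{C_u} = \mathbf{1}_{\{u\}}$ after restricting rows to $V(D) \setminus M$ and columns to $M \cup \mathcal{P}$. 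A single Gaussian elimination yields witnesses for every correctable $u$ simultaneously, giving $O(n^3)$ per iteration and $O(n^4)$ overall, where $n = |V(\mathcal{G}_D)|$.

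\textbf{Soundness.} I would verify the clauses of the PF-Flow definition directly from the output. Condition~(i) holds because $\preceq$ is built by adjoining $\delta M \times M$ at each round, so any two neighbours that both get marked are comparable, with the earlier-marked one strictly $\succ$ the later one. Condition~(iii) follows from Step~3(c)(ii): each $C_{u,v}$ recorded there is already a subset of $M \cup \mathcal{P}$ at the moment $v$ enters $\delta M$, so every element of $(C_{u,v} \setminus \mathcal{P}) \cup (\Odd(C_{u,v}) \setminus \{u\})$ ends up strictly $\succ v$; the case split in Steps~3(c)(iii)--(iv) ensures $f(v)$ is exactly the unique non-correctable neighbour when one exists, guaranteeing coverage of all of $N^-(v) \setminus \{f(v)\}$. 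Condition~(ii) needs a small case analysis: $\delta^+(v) = 0$ in the output forces $N(v) \cap M = \emptyset$ at the moment $v$ is marked, in which case $N_v = N(v) \cup \{v\}$ and the bound $|F_v| \le 1$, together with closure of correctability under symmetric difference with corrected neighbours, yields the required self-corrector $C_{v,v}$.

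\textbf{Completeness.} This is the main obstacle. Assume a PF-Flow $(\preceq', f', \mathfrak{C}')$ exists; I show by induction on the iteration number $\tau$ that the algorithm's marked set $M_\tau$ is a $\preceq'$-upward-closed set that strictly grows until all of $V(\mathcal{G}_D)$ is covered. The base case $M_0 = \mathcal{O}$ is upward-closed in $\preceq'$ by PF-Flow condition (i). For the inductive step, pick any vertex $v$ that is $\preceq'$-maximal in $V(D) \setminus M_\tau$ and argue that the algorithm marks it in round $\tau+1$. For each $u \in N^-_{\preceq'}(v) \setminus \{f'(v)\}$, the hypothetical corrector $C'_{u,v} \in \mathfrak{C}'$ satisfies $C'_{u,v} \setminus \mathcal{P} \subseteq \{w : v \prec' w\} \subseteq M_\tau$ and $\Odd(C'_{u,v}) \setminus \{u\} \subseteq M_\tau$; combined with $u \in \Odd(C'_{u,v}) \setminus M_\tau$, this gives $\Odd(C'_{u,v}) \setminus M_\tau = \{u\}$, witnessing $u \in R$. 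By the $\preceq'$-maximality of $v$, all other elements of $N_v$ also lie in $R$, so $|F_v| \le 1$ and $v$ is marked. Hence progress continues every round until $V(D) \subseteq M$, completing the proof.
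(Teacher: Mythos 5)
Your proposal is correct and follows essentially the same route as the paper: the efficiency argument (outer loop bounded by $n$ plus an $\mathbb{F}_2$ linear system over the adjacency matrix restricted to rows $V(D)\setminus M$ and columns $M\cup\mathcal P$) and the completeness argument (induction showing that any $\preceq'$-maximal unmarked vertex of a hypothetical PF-Flow is marked in the next round, because its correctors $C'_{u,v}$ witness membership in $R$) are precisely the paper's proofs of Lemmas~\ref{lemma:flowFindingRuntime} and~\ref{lemma:flowFindingCompleteness}. Your added soundness sketch goes beyond the paper, which only asserts this ``by construction''; note, though, that your appeal to ``closure of correctability under symmetric difference with corrected neighbours'' for condition (ii) is not justified as stated (symmetric differences of the neighbours' correctors yield $\Odd$-sets hitting $N(v)$ rather than $\{v\}$), so that step would need the direct argument that a maximal vertex with no marked neighbours is itself placed in $R$ via its own self-corrector.
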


Figure \ref{fig:PF-flow} presents an algorithm \PFflowfinding\ to construct a PF-Flow, if one exists.
It determines the partial order $\preceq$ and the corrector-sets $C_{u,v} \in \mathfrak C$, starting from the output, and working back to earlier elements in $\preceq$ towards the preparations and input.

\begin{lemma}%
\label{lemma:flowFindingRuntime}
  \PFflowfinding\ halts in time $\mathrm{poly}(n)$ for $n := \lvert V(D) \rvert$.
\label{lem:halt}\label{lem:itworks}\end{lemma}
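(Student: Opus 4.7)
The plan is to bound the number of outer iterations of \PFflowfinding\ and the cost per iteration separately, and to isolate the non-trivial substep as a standard linear-algebra problem over $\mathbb F_2$.

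First I would establish termination. The outer \textbf{Repeat} loop exits as soon as $\delta M = \emptyset$. In any iteration which does not exit, Step~(c)(i) marks at least one new vertex of $V(D)$, so $|M \cap V(D)|$ strictly increases. Since $|V(D)| = n$, there are at most $n$ outer iterations.

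Next I would bound the per-iteration work. The bookkeeping in Steps~1, 3, 4, and~5 is straightforward and runs in $O(n^2)$ time. The only non-trivial step is Step~2, which constructs $R$ together with the associated sets $C_u$. I would recast it as follows: let $A$ be the $\mathbb F_2$-matrix with rows indexed by $V(D) \setminus M$ and columns indexed by $M \cup \mathcal P$, whose entry $A[v,c]$ is $1$ iff $v$ is adjacent to $c$ in $\mathcal G_D$, with the convention that a self-loop at $c$ contributes $1$ to $A[c,c]$. A set $C_u \subseteq M \cup \mathcal P$ with $\Odd(C_u) \setminus M = \{u\}$ exists precisely when the indicator vector $\mathbf e_u$ lies in the column span of $A$, and any $\mathbb F_2$-solution $\mathbf c$ of $A \mathbf c = \mathbf e_u$ is the characteristic vector of such a $C_u$. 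By computing a reduced echelon form of $A$ once, I can simultaneously decide membership in the column span for every $u \in V(D) \setminus M$ and extract a specific $C_u$ for each; this takes $O(n^3)$ time by Gaussian elimination.

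Multiplying the two bounds yields a total runtime of $O(n^4)$, which is polynomial in $n$. The main obstacle is verifying the linear-algebra reformulation of Step~2: in particular, checking that self-loops (which encode the odd-$\pi/2$ phases in $\mathcal P$) are faithfully captured as diagonal entries of $A$, and that a single Gaussian elimination is indeed enough to characterise $R$ rather than requiring a fresh solve for each candidate $u$.
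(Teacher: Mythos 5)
Your proposal is correct and follows essentially the same route as the paper's own proof: bound the number of outer iterations by $n$ via the strict growth of $M$, and reduce the construction of $R$ in Step~2 to solving $\mathbf A_{[M]}\,\mathbf x = \mathbf e_u$ over $\mathbb F_2$, where the matrix rows are indexed by $V(D)\setminus M$ and columns by $M \cup \mathcal P$ (with self-loops on $\mathcal P$-vertices appearing on the diagonal). Your observation that a single echelon-form computation handles all candidates $u$ simultaneously is a minor refinement the paper leaves implicit, but the argument is the same.
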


\begin{lemma}%
  \label{lemma:flowFindingCompleteness}
  If $D$ is a graph-like ZX diagram with a PF-Flow, then \PFflowfinding\ constructs such a PF-Flow.  
\end{lemma}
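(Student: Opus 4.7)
The plan is to prove completeness by induction on a backward layering of an assumed PF-Flow, showing that \PFflowfinding\ keeps pace with that layering. Fix any PF-Flow $(\preceq^*, f^*, \mathfrak C^*)$ of $D$, and define layers $L_0 := \mathcal O$ and, recursively, $L_{k+1}$ as the set of maximal elements of $V(\mathcal G_D) \setminus (L_0 \cup \cdots \cup L_k)$ under $\preceq^*$. The central claim I would establish, by induction on $k$, is that after $k+1$ iterations of the main loop, the marked set $M$ contains $L_{\le k} := L_0 \cup \cdots \cup L_k$; termination with $V(D) \subseteq M$ then follows, since each $L_{k+1}$ is nonempty whenever $V(\mathcal G_D) \setminus L_{\le k}$ is.

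The inductive step is the technical core. Suppose $L_{\le k} \subseteq M$ at the start of an iteration, and pick any $v \in L_{k+1}$. For each past-neighbour $u$ of $v$ under $\preceq^*$ with $u \neq f^*(v)$, Condition~(iii) of the PF-Flow definition yields a $v$-corrector $C^*_{u,v} \in \mathfrak C^*$; by Definition~\ref{def:correctors}, both $C^*_{u,v} \setminus \mathcal P$ and $\Odd(C^*_{u,v}) \setminus \{u\}$ lie in $\{w : v \prec^* w\} \subseteq L_{\le k} \subseteq M$. Hence $C^*_{u,v} \subseteq M \cup \mathcal P$ and $\Odd(C^*_{u,v}) \setminus M = \{u\}$, which is exactly the certificate that step~\ref{step:find_correctable_vtces} needs in order to place $u$ into $R$. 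An analogous argument, using the $v$-corrector of $v$ itself guaranteed when $\delta^+_{\preceq^*}(v) = 0$, handles the algorithm's branch in which $N(v) \cap M = \emptyset$ and $v$ is adjoined to $N_v$. Consequently every element of $N_v$ except possibly $f^*(v)$ lies in $R$, so $|F_v| \le 1$ and $v$ is added to $\delta M$ in this iteration.

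With termination secured, the remaining task is to verify that the returned triple $(\preceq, f, \mathfrak C)$ meets the three conditions of a PF-Flow. Conditions~(ii) and~(iii) fall out of the algorithm's bookkeeping: the sets $C_{u,v}$ recorded in step~3(b) satisfy $C_{u,v} \subseteq M \cup \mathcal P$ and $\Odd(C_{u,v}) \setminus M = \{u\}$ at the moment $v$ is marked, which translates, in the returned partial order, to $C_{u,v} \subseteq \{w : v \prec w\} \cup \mathcal P$ and $\Odd(C_{u,v}) \setminus \{u\} \subseteq \{w : v \prec w\}$, making them honest $v$-correctors; the assignments in steps~3(d)--3(e) supply $f$ on the single unmatched past-neighbour (or on an arbitrary $m \in M$ when none exists). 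The delicate condition is~(i): the update ${\preceq} := {\preceq} \cup (\delta M \times M)$ only orders pairs marked in distinct iterations, so adjacent vertices must not be marked simultaneously.

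The main obstacle I anticipate lies precisely in ruling out simultaneous marking of adjacent vertices. If $v$ and $v'$ are adjacent and both unmarked at the start of an iteration, each contributes to the other's $N_\cdot$ set; I would show that their candidate witness sets certifying $v, v' \in R$ cannot simultaneously satisfy the $|F_\cdot| \le 1$ test for both, forcing one to wait until a later iteration and thereby restoring comparability via the ${\preceq}$-update. A subsidiary subtlety is that the algorithm's witness $C_u$ for a vertex need not equal the PF-Flow's $C^*_{u,v}$: the inductive step only uses the PF-Flow's witnesses to guarantee \emph{existence} of some set $C_u$ with the required $\Odd$ structure, while the set actually recorded in $\mathfrak C$ may differ --- and this flexibility is exactly what lets \PFflowfinding\ succeed whenever \emph{any} PF-Flow exists, rather than reproducing the particular one assumed.
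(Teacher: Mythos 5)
Your main induction is essentially the paper's own argument. The paper likewise fixes a hypothetical PF-Flow $(\tpreceq,\tilde f,\tilde{\mathfrak C})$, takes a maximal not-yet-marked element $\omega$ of $\tpreceq$, observes that everything in its strict future already lies in $M$ so that the hypothetical correctors $\tilde C_{u,\omega}$ certify $u\in R$ for every $u\in N(\omega)$ except possibly $\tilde f(\omega)$, concludes $\lvert F_\omega\rvert\le 1$, and inducts. Your layer-by-layer packaging, your explicit verification that the recorded sets $C_{u,v}$ are honest $v$-correctors for the \emph{returned} order, and your remark that the algorithm's witnesses need not coincide with the assumed flow's are all sound and somewhat more careful than the paper, which dismisses the validity of the output with ``by construction''.

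The genuine gap is exactly the point you flagged and deferred: condition (i). Unfortunately, the resolution you propose --- that two adjacent unmarked vertices cannot both pass the $\lvert F_\cdot\rvert\le 1$ test in the same iteration --- is false. Take two adjacent spiders $a,b$, with $a$ adjacent to an output endpoint $o_1$ and $b$ adjacent to an output endpoint $o_2$. In the first round $C_a=\{o_1\}$ and $C_b=\{o_2\}$ place both $a$ and $b$ in $R$, both have $F=\emptyset$, both enter the same $\delta M$, and the update ${\preceq}:={\preceq}\cup(\delta M\times M)$ leaves them incomparable, so the returned triple violates (i) even though a PF-Flow for this diagram exists. The lemma therefore cannot be rescued along the line you sketch. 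What does close the gap is a different observation: the algorithm records a corrector $C_{u,v}$ for \emph{every} $u\in N_v\cap R$, not only for those $u$ that end up in $N^-(v)$, and each recorded corrector satisfies $C_{u,v}\subseteq M\cup\mathcal P$ and $\Odd(C_{u,v})\setminus\{u\}\subseteq M$ with $M$ the set marked in strictly \emph{earlier} rounds; hence any arbitrary linearisation of each batch $\delta M$ refines the returned order while preserving (ii) and (iii), and restores comparability of same-round neighbours. (One must also place the endpoints in $\mathcal I$ at the bottom of the order, which the algorithm as written never does.) The paper's proof is silent on all of this, so you have correctly located its weakest point; but ``I would show $X$'' with $X$ false leaves a real hole, and the repair runs through a different idea than the one you anticipate.
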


\smallskip
\noindent
The proofs for these Lemmas are given in Appendix \ref{app:halt} and  \ref{app:suc}.

%
%
\section{Conclusions}
%
%

We have introduced the Pauli Fusion model for quantum computing, enabling us to describe the splitting and merging of information represented by Pauli observables as fundamental information processing operations, as observed in lattice surgery~\cite{de2017zx} and optical fusion~\cite{kok2009five}.
We have given the annotated ZX diagrams that directly represent such operations. Thus PF operations and diagrams represent the logic of the actual physical processing operations, fulfilling the analogous role to gates in the circuit model.

The relationship between the PF model and standard ZX is an important consideration. Our development of the Pauli Fusion model was prompted by the many potential uses of the ZX calculus in quantum computing. The issue with using standard ZX operationally, as noted in the Introduction, was to translate a ZX diagram into a set of operations to run on a device. This `circuit extraction' problem has proved both difficult and costly in terms of operational overheads. By introducing an operational representation that is very closely aligned to ZX, we solved the extraction problem almost by fiat --- but with one important issue outstanding, whether the ZX diagram could be implemented deterministically with the nondeterministic operations of Pauli Fusion. The results in this paper give our solution: when a ZX diagram has a PF-Flow then it may be implemented deterministically, in a way which can be easily obtained by suitable transformations on the ZX diagram. Moreover, we give the polytime algorithm \PFflowfinding\ that finds such a PF-Flow when one exists.

The introduction of Pauli Fusion, and its position as a native operational model for ZX, allows us to envisage using ZX to work the full stack of quantum computing --- from design, through to compilation, and then operational extraction --- without passing through the conventional circuit model.
We hope that the PF model will enable full use of the power of ZX for compilation, optimisation, and verification; and new ways of understanding how physical systems process quantum information.

\vspace*{-2ex}

\subsection*{Acknowledgements}
\vspace*{-1ex}

We are grateful to Pieter Kok for suggesting that optical fusion gates have similar effects to lattice surgery procedures.
NB is supported by the EPSRC National Hub in Networked Quantum Information Technologies (NQIT.org).
DH acknowledges financial support from the `Investissements d'avenir' (ANR-15-IDEX-02) program of the French National Research Agency.
SP acknowledges support from the projects ANR-17-CE25-0009
SoftQPro, ANR-17-CE24-0035 VanQuTe, PIA-GDN/Quantex, and LUE / UOQ. 

\vspace*{-1ex}

\bibliographystyle{eptcs}
\bibliography{generic}

\newpage
\appendix

\section{Lattice surgery and optical fusion gates as Pauli Fusion}\label{app:pf-ls-of}

In this Appendix we indicate how the basic operations of lattice surgery and optical quantum computing are reflected in the abstract elementary Pauli Fusion operations of Section \ref{model}.

\subsection{Lattice surgery}

It was shown in \cite{de2017zx} that the elementary operations of lattice surgery in the surface code \cite{lattice} have the form in terms of Krauss operators that is given here in the equations \eqref{kopp}, \eqref{komp} for a rough merge (fusing the $Z$ logical operator) and  \eqref{kozz}, \eqref{kooz} for a smooth merge (fusing the $X$ logical operator). The adjoint operations are that of the rough and smooth split, respectively.

 In lattice surgery, the probabilistic biproduct (and hence the pairs of Krauss operators) comes about because two surfaces supporting logical qubits are being fused into one. The additional degree of freedom is the comparison between the logical operators that are not being fused. Figure \ref{fig:lsmerge} shows this process for rough merging.
 
\begin{figure}
	\centering
	\includegraphics[width=8cm]{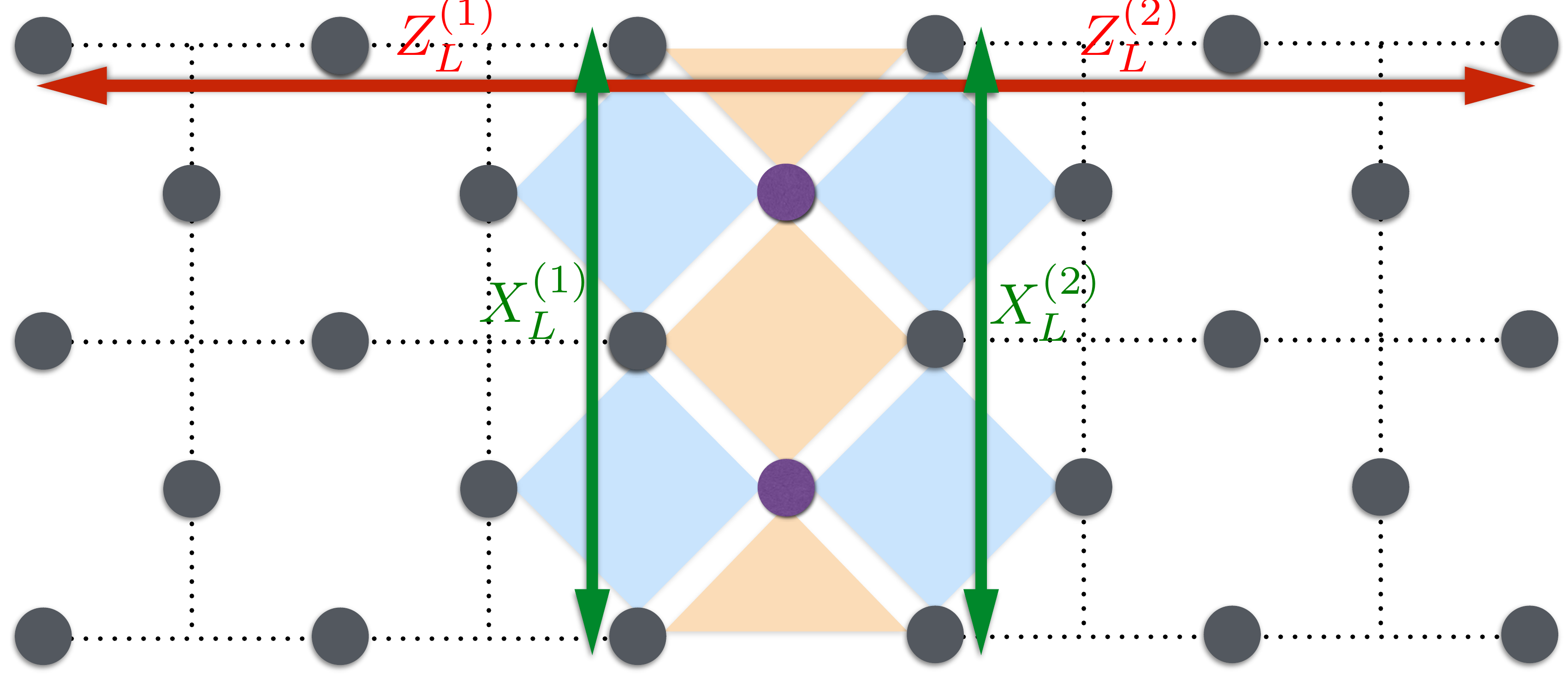}
	\caption{A rough merge. Measuring orange plaquette operators across the join fuses the $Z_L$ operators and outputs the result of a $X_L\ssur{1} \otimes X_L\ssur{2}$ measurement as a classical bit.}
	\label{fig:lsmerge}
\end{figure}

The result of this rough merge is a single qubit with fused $Z_L$ operators. If the $X_L$ operators were identical before then no correction is needed and the operation of the merge is as \eqref{kopp}: 
\begin{equation}     K =  \ket{\texttt+}\!\bra{\texttt{++}} \;+\; \ket{\texttt-}\!\bra{\texttt{--}} . \end{equation}

If they were different (heralded by the -1 measurement outcome) then correction is applied to half the surface -- equivalent to the correction being applied to one incoming surface before the merge. This results in the operation being given by \eqref{komp}: 
\begin{equation} K' = \ket{\texttt+}\!\bra{\texttt{+-}} \;+\; \ket{\texttt-}\!\bra{\texttt{-+}} \end{equation}

 In plain ZX terms these two potential merge operations are written as $\Big(\tikzfig{ZX-HMerge0};\tikzfig{ZX-HMerge1}\Big)$. These can now both be grouped as a single elementary operation of Pauli Fusion,
\begin{equation*}
\mathrm{\mergH u:} \ \ \ \tikzfig{HMerge}
\end{equation*}

\noindent where the colour-reversed diagram gives the smooth merge.

\newpage
\subsection{Optical quantum computing}

The operations of the Type I and Type II optical fusion gates \cite{browne2005resource,kok2009five} bear a close relationship to Pauli Fusion. Both fusion gates are conceived of as acting on halves of separate Bell pairs and entangling them using polarisation rotations, beam splitters, and measurement, figure \ref{fig:fusion}. They were originally developed to fuse together cluster states for one-way quantum computing.
Both are probabilistic gates -- that is, there is a set of measurement outcomes that are labelled `failure' and the system is thrown away. In the Type I gate there is only one `success' outcome, and we will see that corresponds to the positive Pauli Fusion branch (biproducts = 0) of the gate viewed as a merge. The Type 2 by contrast contains both options of a Pauli Fusion merge in the `success' branch of its operation that is not post-selected out. 

\begin{figure}
     \centering
     \begin{subfigure}{0.4\textwidth}
         \includegraphics[width=4cm]{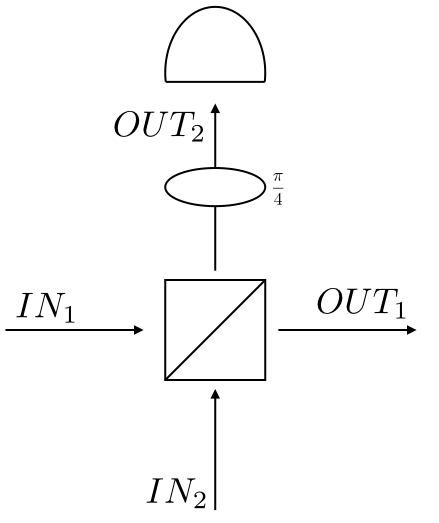}
         \caption{Type I optical fusion gate.}
         \label{fig:typeone}
     \end{subfigure}
     \hfill
     \begin{subfigure}{0.5\textwidth}
         \centering
         \includegraphics[width=5cm]{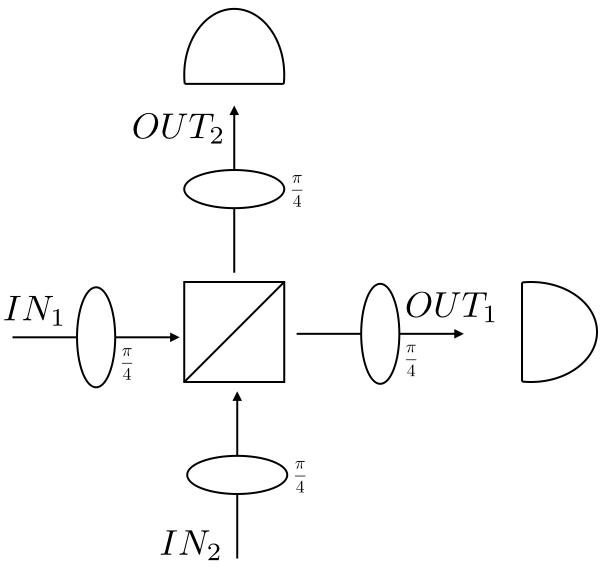}
         \caption{Type II optical fusion gate.}
         \label{fig:type2}
     \end{subfigure}
        \caption{The two types of optical fusion gates using a polarising beam splitter and detectors in the polarisation $|H,V\rangle$ basis. Both work post-selectively to entangle the inputs $IN_1$ and $IN_2$.}
        \label{fig:fusion}
\end{figure}

\subsubsection{Type I fusion gate}

The Type I gate, figure \ref{fig:typeone}, uses qubit states encoded in horizontal and vertical polarisation states of single photons, $\ket{0,1}:=\ket{H,V}$. Subscripts denote the port of the photon: e.g. $|H_1\rangle$ is a horizontally polarised photon in either the $IN_1$ or $OUT_1$ ports. The action of the polarising beam splitter is:
\begin{align}
\ket{H_1}&\rightarrow \ket{H_1}           &  \ket{V_1}&\rightarrow \ket{V_2}     \nonumber      \\
\ket{H_2}&\rightarrow \ket{H_2}         &  \ket{V_2}&\rightarrow \ket{V_1} 
\end{align}

The $\pi/4$ polarisation rotation is the Hadamard $\ket{H,V} \rightarrow \frac{1}{\sqrt{2}}( \ket{H} \pm \ket{V})$. The detector in the $OUT_2$ port is in the $\ket{H,V}$ basis.

We can therefore track what happens to the four qubit basis states as they go through the gate. Firstly, through the PBS:
\begin{align}
\ket{H_1H_2}&\ \overset{pbs}{\longrightarrow}\  \ket{H_1H_2}		 \nonumber	\\
\ket{H_1V_2}&\ \longrightarrow \ \ket{H_1V_1}		 \nonumber	\\
\ket{V_1H_2}&\ \longrightarrow \ \ket{V_2H_2}		 \nonumber	\\
\ket{V_1V_2}&\ \longrightarrow \ \ket{V_2V_1}	\label{eq:pbs}	 
\end{align}

Note that inside the gate, not everything will made immediate sense as qubit states. The photon polarisation states and number are, however, well defined.
Now we apply the $\pi/4$ rotation (neglecting normalisation for simplicity):
\begin{alignat}{2}
\ket{H_1H_2} & \overset{pbs}{\longrightarrow}  \ket{H_1H_2} &&\overset{\pi/4}{\longrightarrow}  \ket{H_1H_2}	 + \ket{H_1V_2} \nonumber \\
\ket{H_1V_2} & \longrightarrow  \ket{H_1V_1} &&\longrightarrow \ket{H_1V_1}	 			 \nonumber\\
\ket{V_1H_2} & \longrightarrow  \ket{V_2H_2} &&\longrightarrow \ket{H_2H_2}	 - \ket{V_2V_2}	 \nonumber \\
\ket{V_1V_2} & \longrightarrow  \ket{V_2V_1} &&\longrightarrow  \ket{H_2V_1}	 - \ket{V_2V_1}	 \label{eq:rot}
\end{alignat}

The post-selection criterion for the gate is that a single photon (only) must be detected in $OUT_2$. The `failure' cases are therefore the middle two rows, as the input $\ket{H_1V_2}$ produces two photon in the $OUT_1$ port and none in $OUT_2$, and the $\ket{V_1H_2}$ input produces two photons in $OUT_2$. With no loss or other error, the post-selected `success' operations (the first and last in \eqref{eq:rot}) can be represented by the operator
\begin{equation}     K = \ket{H_1}\!\bra{H_1H_2} \;+\; \ket{V_1}\!\bra{V_1V_2} \ = \  \ket{0}\!\bra{00} \;+\; \ket{1}\!\bra{11}  \end{equation}

\noindent which is our positive branch merge operator \eqref{kozz}. The post-selected action of the gate is therefore to fuse the incoming $Z$ operators of the photon qubits.

\subsubsection{Type II fusion gate}

While the Type I gate has similarities to merge operations, it is not a full Pauli Fusion operation as we have given in this paper, because there is no negative branch outcome. The Type II fusion gate includes this outcome (as well as its own `failure' outcome that is post-selected against).

The action of the Type II gate is much more complicated. The first two $\pi/4$ rotations act on the basis input states as
\begin{align}
\ket{H_1H_2}&\ \overset{\pi/4}{\longrightarrow} \ \ket{H_1H_2} + \ket{H_1V_2} +	\ket{V_1H_2}+\ket{V_1V_2}	 \nonumber	\\
\ket{H_1V_2}&\ \longrightarrow \  \ket{H_1H_2} - \ket{H_1V_2} +	\ket{V_1H_2}-\ket{V_1V_2}		 \nonumber	\\
\ket{V_1H_2}&\ \longrightarrow \  \ket{H_1H_2} + \ket{H_1V_2} -	\ket{V_1H_2}-\ket{V_1V_2}		 \nonumber	\\
\ket{V_1V_2}&\ \longrightarrow \  \ket{H_1H_2} - \ket{H_1V_2} -	\ket{V_1H_2}+\ket{V_1V_2}		 
\end{align}

For simplicity, we now track only the $\ket{00}$ input state in detail, as the others differ only by the $\pm$ signs in the superposition. The action of the PBS on each element in the superposition is the same as in \eqref{eq:rot}. The result is then
\begin{alignat}{2}
\ket{H_1H_2} & \ \overset{\pi/4}{\longrightarrow} \ \ket{H_1H_2} + \ket{H_1V_2} +	\ket{V_1H_2}+\ket{V_1V_2} &&\overset{pbs}{\longrightarrow}  \ket{H_1H_2}	 + \ket{H_1V_1} +  \ket{V_2H_2} +  \ket{V_2V_1}  
\end{alignat}

The second pair of rotations in the output ports produces (after cancellations) the final state
\begin{align}
\ket{H_1H_2} \ \xrightarrow{\pi/4 + pbs + \pi/4} \ & \ket{H_1H_2}  + \ket{V_1V_2} \nonumber\\
 & +  \ket{H_1H_1} + 2 \ket{H_1V_1} +  \ket{V_1V_1} \nonumber\\
 & +  \ket{H_2H_2} + 2 \ket{H_2V_2} +  \ket{V_2V_2}  
\end{align}

The post-selection criterion is again that one photon exactly is detected; in this case, one in each output port. The bottom two rows of this state are therefore the `failure' states, as in each case both photons exit one port. Applying the post-selection to all the input basis states gives
\begin{align}
\ket{H_1H_2}& \ \xrightarrow{\pi/4 + pbs + \pi/4 + post} \ \ket{H_1H_2}  + \ket{V_1V_2}	 \nonumber	\\
\ket{H_1V_2}&\ \xrightarrow{\hspace{6.5em}} \ \ket{H_1V_2}  + \ket{V_1H_2}	 \nonumber	\\
\ket{V_1H_2}&\ \xrightarrow{\hspace{6.5em}} \ \ket{H_1V_2}  + \ket{V_1H_2}	 \nonumber	\\
\ket{V_1V_2}&\ \xrightarrow{\hspace{6.5em}} \ \ket{H_1H_2}  + \ket{V_1V_2}
\end{align}

We see that, unlike in the Type I gate, there are no input basis states selected out. We can also see that the final measurements $HH,HV,VH,VV$ give us one piece of information: whether the inputs were the same or different. We therefore have two operators defining the evolution of the gate, depending on the measurement outcome parity (positive or negative):
\begin{equation}
 \bra{H_1H_2}  + \bra{V_1V_2} \ \ \ \mathrm{or} \ \ \ \bra{H_1V_2}  + \bra{V_1H_2}
\end{equation}

This may not look immediately Pauli Fusion-like as there is no outcome (both photons are detected and hence absorbed). However, recall that the gates act on Bell pairs. The operation in the positive-parity outcome situation is therefore
\begin{equation*}
\tikzfig{ZX-fusion02} \ = \tikzfig{ZX-fusion01} \ \ \ .
\end{equation*}
The negative branch is 
\begin{equation*}
\tikzfig{ZX-fusion12} \ = \tikzfig{ZX-fusion11}\ \ \ .
\end{equation*}

These can now both be combined as the Pauli Fusion diagram $\tikzfig{ZX-fusionPF}\ \ \ $ .\\

A further consequence is that other important operations in optical quantum computing that use such a parity projection (including entanglement swapping, entanglement distillation, and the Barrett-Kok entanglement generation scheme) will find natural and straightforward descriptions in the Pauli Fusion model.

\newpage
\section{Proofs}
\subsection{Proof of runnability (Lemma~\ref{lemma:runnability})}\label{app:run}

Every node $v \in V(D_\PFus)$ satisfies either $u \prec v$ or $v \prec u$ for each of its neighbours $u \in V(D)$.
Then either $v$ is a preparation, adjacent to an input, or has a non-empty set of neighbours which precede it in $\preceq$.
Define a function $t: V(D_\PFus) \to \mathbb N$ recursively by setting $t(v) = 1$ for all nodes $v$ which are preparations or which are adjacent to an input, and by recursively defining
\begin{equation}
    t(v)
  \;=\;
    1 \,+\;
    \max \; \Bigl\{ t(u) \;\Big|\; 
      \bigl(u \in V(\mathcal G_D) \,\mathbin\&\, u \prec v\bigr) \text{ or } \\ (u \to v) \in E(D_\PFus) \Bigr\}
\end{equation}
for all other $v \in V(D_\PFus)$.
By construction, we have $t(u) < t(v)$ for $u,v \in V(D_\PFus)$ with an edge $u \to v$.
Furthermore, if for some $u,v \in V(D_\PFus)$ we have $\tau(v) \in \{\rotV {} {\alpha,S,T}, \rotH {} {\alpha,S,T}\}$ where $u \in S \cup T$, it follows that one of the following two cases hold:
\begin{itemize}
\item
  $\delta^+(u) = 0$, and $v \in \bigl(C \setminus \mathcal P) \cup \bigl(\Odd(C_{u,u}) \setminus \{u\}\bigr)$, from which it follows that $u \prec v$ by the conditions which hold of $C_{u,u}$.
  Then $t(u) < t(v)$.
\item  
  $v \in \bigl(C_{\tilde u, \tilde v} \setminus \mathcal P\bigr) \cup \bigl(\Odd(C_{\tilde u, \tilde v}) \setminus \{ \tilde u \}\bigr)$ for some vertices $\tilde u, \tilde v \in V(\mathcal G_D)$ such that $\tilde u \in N^-(\tilde v)$, and it happens that $u$ is an element of $P_{\tilde u, \tilde v}$, the set of vertices which are generated in the decomposition of a higher-degree node with label $\tilde v$ in the ZX diagram $D$, and which lie on a path between $\tilde u$ and $\tilde v$.
  From this it follows that $t(\tilde u) < t(u) < t(\tilde v)$; and as $\tilde v \prec v$ from the conditions on $C_{\tilde u, \tilde v}$, it follows that $t(u) < t(\tilde v) < t(v)$.
\end{itemize}
Then $t$ is a time-ordering of $D_\PFus$, from which Lemma~\ref{lemma:runnability} follows.

\subsection{Proof of determinism (Lemma~\ref{lemma:determinism})}\label{app:det}

 We show, given $s \in \{0,1\}^{\mathcal B}$, how to transform $D_\PFus(s)$ into $D$ using transformations of the ZX-calculus which preserve the semantics (up to a sign).
  We proceed by induction on the Hamming weight $\lVert s \rVert_{\mathrm{H}}$ of $s$.
  If $s = 00\cdots0$, then it is easy to see that $\denote{D_\PFus(s)} = \denote{D}$: all $\pi$-phases depending on bits $s_b$ for $b \in \mathcal B$ either contribute $0$ to a rotation/projection node, or are equivalent to the identity.
  We may then use the ZX calculus to reverse the transformations of \PFcompilation, thereby obtaining the diagram $D$.
  
  Suppose instead that there is some $v \in \mathcal B$ such that $s_v = 1$.
  Let $D_\PFus|_{s_v=b}$ be the AZX diagram obtained from $D_\PFus$ by setting $s_v \gets b$, for any $b \in \{0,1\}$.
  We show how to rewrite $D_\PFus|_{s_v=1}$ to $D_\PFus|_{s_v=0}$ using the ZX calculus, accruing at most a sign of $-1$ in the semantics as we do so, allowing us to prove the equivalence of $D_\PFus(s)$ to $D_\PFus(\tilde s)$ for a string $\tilde s$ that has fewer bits set to $1$ than $s$ does.
  \begin{itemize}
  \item  
    If $\tau(v) \in \bigl\{\projH {}, \projV {}\bigr\}$, there is a set $C_{v,v} \in \mathfrak C$ which is a $v$-corrector at $v$: denote this by $C$, let $\tilde v := v$.
    There is also a node in $D_\PFus$ with a label $w \in \Odd(C)$, such that $\delta^+(w) = 0$ in $\mathcal G_D$, with the same colour as $v$.
    We rewrite the diagram $D_\PFus$ as an AZX diagram, by propagating the phase $s_v \pi$ from $v$ to $w$.
    Denote the resulting diagram by $D'_\PFus$.
  \item
    Otherwise, if $\tau(v) \in \bigl\{ \mergH {}, \mergV {} \bigr\}$, then there is an associated vertex $\tilde v \in V(D_\PFus)$, corresponding to a node-label $\tilde v \in V(D)$ of degree $3$ or greater, and one or more node-labels $\tilde u_1, \tilde u_2, \ldots \in N^-(\tilde v)$ in $\mathcal G_D$, such that $v \in P_{\tilde u_j, \tilde v}$ for each $j \ge 1$.
    For each of these nodes $\tilde u_j$\,, there is a $\tilde u_j$-corrector at $\tilde v$, $C_{\tilde u_j,\tilde v}$.
    We let $C$ be the symmetric difference $C := C_{\tilde u_1, \tilde v} \mathbin\Delta C_{\tilde u_2, \tilde v} \mathbin\Delta \cdots$ of these (just $C_{\tilde u, \tilde v}$ if there is only one such $\tilde u$).
    We also rewrite the diagram $D_\PFus$ as an AZX diagram, by propagating the phase $s_v \pi$ away from the node $v$, against the orientation of the edges, towards the nodes with labels $\tilde u_j$\,, explicitly accumulating the phase $s_v \pi$ on these nodes.
    (This may involve one or application of the bialgebra law between $s_v \pi$ nodes and opposite-coloured nodes of degree $3$, and applications of the spider rule between the nodes $\tilde u_j$ and $s_v \pi$ phase nodes of the same colour.)
    Denote the resulting diagram by $D'_\PFus$.
  \end{itemize}
  In either case, we have a set $C$ which is involved in the correction of vertices, which is involved in annotations on other vertices involving the bit $s_v$.
  Specifically, if $\tilde v = v$ is a projection, then $s_v$ governs a $\pi$-phase contribution for all nodes $w \in \Odd(C) \setminus \{ v \}$, and a change in sign of the phase of all nodes $t \in C \setminus \mathcal P$.
  Otherwise $v$ is a vertex in each of a collection of paths $P_{\tilde u_j,v}$, so that $s_v$ governs (possibly canceling) $\pi$-phase contributions for all nodes $w \in \Odd(C_{\tilde u_j,\tilde v}) \setminus \{ \tilde v \}$ for each $\tilde u_j$, and (possibly canceling) sign-flips of the phase of all nodes $t \in C_{\tilde u_j, \tilde v} \setminus \mathcal P$ for each $\tilde u_j$.
  
  \begin{subequations}%
  \medskip\noindent
  Using the set $C$ constructed as above, we consider the following rewrites on $D_\PFus$:
  \begin{enumerate}
  \item
    For each vertex-label $t \in C \setminus \mathcal P$, by construction the diagram $D_\PFus$ will contain a generator $\rotV t {\alpha,S,T}$ or $\rotH t {\alpha,S,T}$ where $v \in S$.
    We thus rewrite the diagram as an AZX diagram by surrounding $t$ with $s_v \pi$-phase nodes of the opposite colour, and remove $v$ from the set of variables $S$ which may govern a change of sign of the rotation, \eg:
    \begin{align}
      {\tikzfig{Mcorr}} \;\;\mapsto{}&\;\; {\tikzfig{corr_s2R}}
    \end{align}
    Denote the resulting diagram by $D_\PFus''$.
  \item
    For each vertex-label $t \in C \cap \mathcal P$, by construction the diagram $D_\PFus$ will contain a generator $\rotV t {\alpha,\emptyset,T}$ or $\rotH t {\alpha,\emptyset,T}$ where $\alpha$ is an integer multiple of $\pi/2$ (\ie,~$2\alpha/\pi$ is an integer), and $v$ may or may not be an element of $T$.
    \begin{itemize}
    \item
      If $2\alpha/\pi$ is even, then we rewrite the diagram by surrounding $t$ with $s_v \pi$-phase nodes of the opposite colour, without any other changes, \eg: 
      \begin{align}
        {\tikzfig{Mcorr_pion2}} \;\;\mapsto{}&\;\; {\tikzfig{corr_s2R_pi}}
      \end{align}
    \item
      If $2\alpha/\pi$ is odd, then we rewrite the diagram by surrounding $t$ with $s_v \pi$-phase nodes of the opposite colour, and ``toggling'' the membership of $v$ in $T$, \eg:
      \begin{align}
        {\tikzfig{Mcorr_pion2}} \;\;\mapsto{}&\;\; {\tikzfig{corr_s2R_pion2}}
      \end{align}
    \end{itemize}
    Denote the resulting diagram by $D_\PFus'''$.
  \item
    For any $s_v\pi$ phase produced in the previous steps which is adjacent to a node $t \in C$, propagate the phase node away from $t$ (either consistently with the orientation of the edges, or consistently against the orientation) until it is adjacent to a node with label $w \in \Odd(C)$ of the same colour, or more generally forms  part of a chain of $s_v \pi$ phase nodes of which one end is adjacent to a node with label $w \in \Odd(C)$ of the same colour.
    To do this, it may be necessary to propagate two nodes with phase $s_v \pi$ of opposite colour past one another in opposite directions: this induces at most a change of sign due to anticommutation of $X$ and $Z$.
    Denote the resulting diagram by $\tilde D_\PFus$.

\item
    For each $w \in \Odd(C)$, $\tilde D_\PFus$ contains a generator $\rotH w {\alpha,S,T}$ or a generator $\rotV w {\alpha,S,T}$.
    \begin{itemize}
    \item  
      If $\alpha$ is an odd multiple of $\pi/2$ and $w \in C$, then by construction there will be an even number of $s_v \pi$ phase nodes, either adjacent to $w$ it or more generally in one or two chains which are adjacent to $w$, and we will have $v \notin T$.
      We may then absorb all of these phases into $w$ without modifying the generator at $w$.
    \item  
      If $\alpha$ is an odd multiple of $\pi/2$ and $w \notin C$, or if $\alpha$ is not an odd multiple of $\pi/2$, then by construction there will be an odd number of $s_v \pi$ phase nodes, either adjacent to $w$ it or more generally in one or two chains which are adjacent to $w$, and we will have $v \in T$.
      We may then absorb all of these phases into $w$ if we replace  $\rotH w {\alpha,S,T}$ with $\rotH w {\alpha,S,T\mathbin\Delta\{v\}}$ or replace $\rotV w {\alpha,S,T}$ with $\rotV w {\alpha,S,T\mathbin\Delta\{v\}}$.
    \end{itemize}%
  \end{enumerate}%
  \end{subequations}%
  This sequence of rewrites has the effect of removing all instances of $s_v$ from the diagram, \ie,~it removes $v$ from all sets which modify the phases of rotations, without affecting any other influences on the phases and (at the end) without there being any other change to to the structure of the diagram from $D_\PFus$.
  Thus the diagram is equivalent to $D_\PFus|_{s_v = 0}$, while incurring a change in semantics by at most a sign.
  From this it follows that $\denote{D_\PFus(s)} = \pm\denote{D_\PFus(\tilde s)}$.
  
  By induction, it follows that $\denote{D_\PFus(s)} = \pm \denote{D_\PFus(00\cdots0)}$ for any $s \in \{0,1\}^{\mathcal B}$, proving Lemma~\ref{lemma:determinism}.

\subsection{Proof that \PFflowfinding\ halts in polynomial time (Lemma~\ref{lemma:flowFindingRuntime})}\label{app:halt}

As the first operation of the loop is to set $\delta M := \emptyset$, the algorithm will terminate in any loop where Step~3c\:\!\ref{step:mark_an_element} is not executed at least once, in which an element of $V(D) \setminus M$ is added to $\delta M$.
If this step is run, then Step~\ref{step:mark_vertices} increases the size of $M$, decreasing the set of elements which may be added to $\delta M$ in subsequent loops.
It follows that the loop is executed at most $n$ times.

In each iteration, the operations performed consist largely of tests for membership in sets, or constructions of intersections, subtractions, unions, or products of sets, each of which can be performed in polynomial time.
The step whose cost is least obvious is the computation of the set $R$, whose cost we describe below.

Given a vertex $u$, finding whether there is a corresponding set $C_u$ amounts to solving a system of equations in $\mathbb F_2$,
\begin{equation}
  \label{eqn:testCorrectabilityGF2}
  \mathbf A_{[M]}\, \mathbf x = \mathbf e_u\;,  
\end{equation}
where $\mathbf e_u \in \mathbb F_{2}^{\:\!\smash{V(D) \setminus M}}$ is the characteristic vector of the set $\{u\} \subseteq V(D) \setminus M$, and $\mathbf A_{[M]}$ denotes the submatrix of the adjacency matrix $\mathbf A$ of $D$ whose rows are indexed by $V(D) \setminus M$ and whose columns are indexed by $M \cup \mathcal P$.
Each column of $\mathbf A_{[M]}$ is the characteristic vector for the set of vertices which are adjacent to an element of $t \in M \cup \mathcal P$; and which are not yet marked.
Then $\mathbf A_{[M]} \, \mathbf x$ represents the set of unmarked vertices which are adjacent to an odd number of some set of vertices $X \subset M \cup \mathcal P$, whose elements are indicated by the non-zero coefficients of $\mathbf x$.
Determining whether Eqn.~\eqref{eqn:testCorrectabilityGF2} has solutions can be performed efficiently, as can producing one such solution $\mathbf c_u$, which then represents a characteristic vector of a corrector set $C_u$.

Thus \PFflowfinding\ halts in polynomial time for any graph-like ZX diagram $D$, proving Lemma~\ref{lemma:flowFindingRuntime}.

\subsection{Proof of correctness of \PFflowfinding\ algorithm (Lemma~\ref{lemma:flowFindingCompleteness})}\label{app:suc}

  Let $(\tpreceq, \tilde f, \tilde{\mathfrak C})$ be a PF-Flow for $D$.
  In particular, any two vertices which are adjacent in $\mathcal G_D$ are comparable in $\preceq$; there is a $v$-corrector at $v$ ($\tilde C_{v,v} \in \tilde{\mathfrak C}$) for any vertex $v \in V(D)$ whose neighbours all precede it; and there is a $u$-corrector at $v$ ($\tilde C_{u,v} \in \mathfrak C$) for any adjacent vertices $u \preceq v$ from $D$ such that $u \ne \tilde f(v)$.

  Consider a maximal element $\omega \in V(D)$ with respect to $\tpreceq$.
  As any $t \in V(\mathcal G_D) \setminus \{\omega\}$ for which $\omega \tpreceq t$ can neither be an element of $V(D)$ nor $\mathcal I$, it would follow that $t \in \mathcal O$.
  Thus for $C$ any $u$-corrector set at $\omega$, whether $u = \omega$ or $u \in N(\omega)$, must have the properties that $C \setminus \mathcal P \subseteq \mathcal O$ and $\Odd(C) \setminus \{w\} \subseteq \mathcal O$.
  On the first iteration of the loop, we have $M = \mathcal O$, so that the corrector sets $\tilde C_{u,\omega}$ satisfy the conditions in the construction of the set $R$ (though the algorithm may find different corrector sets $C_u$), for all $u \in N(\omega) \setminus M$ excepting possibly $u = \tilde f(\omega)$.  
  (If $\omega$ has no neighbours in $\mathcal O$, a corrector set $\tilde C_{\omega,\omega}$ also exists, so that $\omega \in R$ at this stage as well.
  Again, the corrector set $C_\omega$ may differ from $\tilde C_\omega$.)
  In the iteration through $v \in V(D) \setminus M$ in this first loop, we will then find $\lvert F_\omega \rvert \le 1$, and add $\omega$ to $\delta M$, so that it will become a maximal element of the relation $\preceq$.
  If $F_\omega$ is non-empty, it will contain $\tilde f(\omega)$, so that we will have $f(\omega) = \tilde f(\omega)$ in this case.
  
  We may show by induction that \PFflowfinding\ will construct a PF-Flow ${(\preceq, f, \mathfrak C)}$, by noting that in each iteration there will be a maximal element of $\tpreceq$ in $V(D)$ subject to not yet having been marked in a previous iteration, which by a similar analysis will be added to $\delta M$ in that iteration, and that the data $(\preceq, f, \mathfrak C)$ satisfy the properties of a PF-Flow by construction.  
  This proves Lemma~\ref{lemma:flowFindingCompleteness}.

\end{document}